\title{Credible Decentralized Exchange Design via Verifiable Sequencing Rules}
\author{Matheus V. X. Ferreira\thanks{Harvard University (\href{mailto:matheus@seas.harvard.edu}{matheus@seas.harvard.edu}).}
\and
David C. Parkes\thanks{Harvard University (\href{mailto:parkes@eecs.harvard.edu}{parkes@eecs.harvard.edu}).}
}
\begin{document}

\maketitle

\begin{abstract}
Trading on decentralized exchanges has been one of the primary use cases for permissionless blockchains with daily trading volume exceeding billions of U.S.~dollars. In the status quo, users broadcast transactions they wish to execute in the exchange and miners are responsible for composing a block of transactions and picking an execution ordering --- the order in which transactions execute in the exchange. Due to the lack of a regulatory framework, it is common to observe miners exploiting their privileged position by front-running transactions and obtaining risk-fee profits. Indeed, the Flashbots service institutionalizes this exploit, with miners auctioning the right to front-run transactions. In this work, we propose to modify the interaction between miners and users and initiate the study of {\em verifiable sequencing rules}. As in the status quo, miners can determine the content of a block; however, they commit to respecting a sequencing rule that constrains the execution ordering and is verifiable (there is a polynomial time algorithm that can verify if the execution ordering satisfies such constraints). Thus in the event a miner deviates from the sequencing rule, anyone can generate a proof of non-compliance.

We ask if there are sequencing rules that limit price manipulation from miners in a two-token liquidity pool exchange. Our first result is an impossibility theorem: for any sequencing rule, there is an instance of user transactions where the miner can obtain non-zero risk-free profits. In light of this impossibility result, our main result is a verifiable sequencing rule that provides execution price guarantees for users. In particular, for any user transaction $A$, it ensures that either (1) the execution price of $A$ is at least as good as if $A$ was the only transaction in the block, or (2) the execution price of $A$ is worse than this ``standalone'' price and the miner does not gain when including $A$ in the block. Our framework does not require users to use countermeasures against predatory trading strategies, for example, set limit prices or split large transactions into smaller ones. This is likely to improve user experience relative to the status quo.
\end{abstract}

\keywords{Decentralized Exchange; Market Design; Order Execution; Pricing; Front-running}

\section{Introduction}\label{sec:introduction}

Decentralized finance,  also referred to as {\em DeFi}, has been one of the main applications of permissionless blockchains. DeFi protocols allow liquidity providers to lock capital into smart contracts. The locked money enables the liquidity provider to obtain revenue from transaction fees by providing liquidity for financial services such as lending and trading. As of 2022, the locked capital in DeFi protocols exceeds $\$40$ billion U.S.~dollars~\cite{defipulse}, with the most prominent decentralized exchange, Uniswap~\cite{adams2021uniswap}, having over $\$7$ billion U.S.~dollars in reserves and trading volume that often exceeds billions of U.S.~dollars per day~\cite{uniswapvolume}.

In theory, a decentralized exchange allows traders to submit buy or sell orders to a smart contract without needing any intermediary. Uniswap implements a liquidity pool decentralized exchange as follows. Liquidity providers {\em lock} capital into a liquidity pool of two token types. Let $X_i$ be the amount of token $i \in \{1, 2\}$ locked on a liquidity pool. The product $X_1 \cdot X_2$ defines the {\em potential} corresponding to the current state of the exchange. A user can submit a trade on Uniswap against the liquidity pool and, for example, withdraw $q>0$ units of token $1$ as long as they deposit a quantity $p>0$ of token $2$ that preserves the potential:
\begin{equation}\label{eq:uniswap}
(X_1 - q) \cdot (X_2 + p) = X_1 \cdot X_2.
\end{equation}

Many alternatives exist to the product potential function of Uniswap~\cite{egorov2019stableswap, martinelli2019non}, and Uniswap itself has been generalized in recent years to Uniswap v3, which allows for the amount of locked liquidity to vary by price. Still, a common feature of {\em liquidity pool exchanges} is the existence of a reserve $X \in \mathbb R^2$ (or, in general, $X \in \mathbb R^n$ for $n$ tokens) representing the locked capital (and the exchange state). Users submit orders to buy or sell a particular token, modifying the exchange state after execution. In the example~\eqref{eq:uniswap}, the exchange state changes from $(X_1, X_2)$ to $(X_1 - q, X_2 + p)$ when the user trades $p$ units of token $2$ for $q$ units of token $1$.

In the blockchain setting, where computation and storage resources are highly scarce, liquidity pool exchanges provide benefits compared with traditional order books. First, the memory storage is constant while an order book's memory requirement grows with the number of pending transactions. In terms of computation, executing an order requires only a constant number of operations, while an order book requires matching buys with sells and updating underlying data structures. Regarding latency, orders execute instantaneously when processed, while order books need each buy/sell order to wait to match with a corresponding sell/buy order.

In an ideal setting, users would privately submit orders to the liquidity pool exchange; in this sense, there would be no intermediary. In practice, miners act as intermediaries between users and the exchange. Miners choose which pending orders to include in a block and, in the status quo,  are also responsible for specifying the order with which transactions execute, which we refer to as the {\em execution ordering}.\footnote{We refer to a miner as the agent that defines the content and ordering of transactions inside a block. This is the most common terminology for Proof-of-Work blockchains like Bitcoin. On the other hand, Proof-of-Stake blockchains like Ethereum refer to these agents as block proposers. The distinction between a Proof-of-Work miner and a Proof-of-Stake block proposer is irrelevant to our results since they play the same role. Moreover, decentralized exchanges can be implemented in either Proof-of-Work blockchains or Proof-of-Stake blockchains and are equally vulnerable to front-running by miners/proposers. Indeed, Uniswap was initially implemented over Ethereum Proof-of-Work while their most recent implementation operates over Ethereum Proof-of-Stake.}

As one can observe from~\eqref{eq:uniswap}, the state at which an order executes highly influences the trade outcome. Not surprisingly, miners can take advantage of their role to manipulate the content of a block and its execution ordering~\cite{park2021conceptual}. This is known as {\em Miner Extractable Value} (MEV), also referred to as maximum/maximal extractable value. A well-documented, front-running attack from miners is a {\em sandwich attack}~\cite{daian2020flash, zhou2021high}.\footnote{See \cite{daian2020flash, weintraub2022flash} for a broader discussion and empirical measurements on different kinds of MEV.} Here, a malicious miner purchases $x>0$ units of a token ahead of  $q>0$ units purchased by a user, and then immediately sells the $x$ units. The effect is that the miner achieves risk-free profits at the cost of a higher price for the user. This attack has been institutionalized through the {\em front-running-as-a-service} business model provided by  Flashbots~\cite{flashbots, weintraub2022flash}, where miners auction the right to allow another party to insert orders and manipulate the order execution.

Although front-running schemes are familiar to traditional finance, a history of regulation in the financial system is designed to protect traders from this kind of market manipulation~\cite{brunnermeier2009fundamental}. However, these regulations have not been enforced on decentralized exchanges where anonymous entities can become a miner and create blocks. Given the absence of regulatory enforcement, there is a great interest in developing algorithmic techniques to mitigate market manipulation by miners (or other entities). Batch auctions are one such approach~\cite{budish2015high, gnosis, mcmenamin2022fairtradex}, where transactions are batched and all clear at the same market clearing price. However, blockchains execute transactions sequentially rather than in batches. Thus smart-contract-based batch auctions introduce latency and computational overhead.  Instead, we focus on the well-established (sequential) liquidity pool exchange model, where transactions execute sequentially at potentially different prices, due to their computational efficiency and low latency, and we seek to design sequencing rules that can provably mitigate opportunities for miner manipulation.

A {\em trusted relay service}~\cite{capponi2022evolution, flashbots} like Flashbots Protect (developed by the same organization that operationalizes MEV auctions) is another approach to mitigating manipulation and one that has gained relatively rapid adoption.
A user privately communicates their transaction to a trusted service under the promise that the service will not act on privileged information obtained from that user (by injecting transactions and manipulating execution ordering).
The relay service recruits a trusted set of miners to include that transaction in a block.\footnote{For simplicity, we consider a game between users and miners that already captures the challenges of designing a trustworthy decentralized exchange. In practice, the interactions between users with decentralized exchanges implemented over the Ethereum blockchain are becoming increasingly more complex. Since the Ethereum Proof-of-Stake update, most of the interactions between users and the blockchain proceeds as follows. A user sends a transaction to a block builder (like Flashbots protect) that generates a block --- an ordered list of transactions. When constructing the block, the builder includes private transactions (transactions users send directly to Flashbots) and public transactions (transactions users broadcast to the whole network). Flashbots protect allows sandwich attacks on public transactions but promises not to front-run private transactions. The block builder forwards the block to a relay service (like Flashbots Relay) together with a bid. The relay service forwards a hash of the block together with the bid to validators. The validator chooses which block to confirm via an auction: picking the hash with the highest bid (without seeing the block content). Only after the validator commits to a winning hash, the relay reveals the block associated with that hash. This way validators cannot front-run user transactions since they already commit to the block content. Our results also apply to this setting since the status quo requires users to trust that both Flashbots Protect and the Flashbots Relay will not act on privileged information and sandwich attack user transactions.}
Such approaches, however, lack credibility because existing solutions provide no mechanism for which a user can verify that the service manager (or the trusted miner) does not manipulate the execution ordering for profit.

Our approach is similar to a relay service, where users communicate their transactions to the service, and the service sequences these transactions into blocks. However, our approach makes such a relay service  credible: we give a concrete procedure by which any user can verify a set of formal guarantees. To be concrete, we model a miner as the entity that is responsible for picking which transactions to include in a block $B$, and that is free to manipulate the block's content and include its own transactions. However, the miner can commit to picking an execution ordering from a {\em verifiable sequencing rule}. Formally, a {\em sequencing rule} is a function $S$ that takes the initial state $X_0$ (of liquidity reserves), before any transaction executes, the block $B$ (the transactions to include), and outputs a non-empty set  $S(X_0, B)$, which is a set of permutations of $B$. We refer to the elements of $S(X_0, B)$ as {\em valid execution orderings}. We ask that a sequencing rule is {\em efficient}, in the sense that there is a polynomial time algorithm that can compute some $T \in S(X_0, B)$ for any $(X_0, B)$. Moreover, we ask that a sequencing rule is {\em verifiable}, in the sense that anyone can efficiently check if $T \in S(X_0, B)$ for any $(T, X_0, B)$. Hence, any  detectable deviation by a miner from $S$ (picking $T \not\in S(X_0, B)$) can be punished, either financially or via reputation loss.

We ask if there are sequencing rules where the miner cannot profit from manipulating  the block content while respecting the sequencing rule. We summarize our findings as follows:
\begin{enumerate}
    \item {\bf Theorem~\ref{thm:pump-dump} (Folklore).} For a large class of liquidity pool exchanges, including the product potential design of Uniswap, and where miners can pick the content of a block (including adding their own transactions) and sequence transactions as they like, a user might receive an arbitrarily bad execution price. That is, a user who buys $q$ units of a token might make an arbitrarily large payment. Equivalently, a user who sells $q$ units of a token might receive an arbitrarily small payment.
    \item {\bf Theorem~\ref{thm:impossibility}.} For a class of liquidity pool exchanges (that includes Uniswap), for any sequencing rule, there are instances where the miner has a profitable risk-free undetectable deviation.
    \item {\bf Theorem~\ref{thm:greedy}.} We specify a sequencing rule (the Greedy Sequencing Rule) such that, for any valid execution ordering, then for any user transaction $A$ that the miner includes in the block, it must be that either (1) the user efficiently detects the miner did not respect the sequencing rule, or (2) the execution price of $A$ for the user is at least as good as if $A$ was the only transaction in the block, or (3) the execution price of $A$ is worse than this standalone price but the miner does not gain when including $A$ in the block.
\end{enumerate}

Theorem~\ref{thm:pump-dump} shows a miner can force an arbitrarily bad execution (and profit as a result) if they can pick an execution ordering (in the absence of trading costs for the miner). This result can be weakened by introducing transaction fees (paid by each transaction in the block) or trading fees (proportional to the trading volume). However, sufficiently large transactions can still be the victim of a sandwich attack. While prior work has explored (often complex) trade-offs between fees and order size~\cite{heimbach2022eliminating} to mitigate market manipulation, our positive result holds even in the absence of trading costs (blockchain transaction fees or exchange fees).

One can also weaken Theorem~\ref{thm:pump-dump} by allowing users to make use of a {\em limit price} to specify the maximum they want to pay for a buy order (or the minimum they want to receive for a sell order). By setting a limit price, a transaction fails to execute unless their execution price is as good as this limit. If a transaction fails, no trade takes place, but users must still pay blockchain transaction fees. Requiring all transactions to pay fees, even if they fail, is a security mechanism from existing blockchains to increase the cost for a Denial-of-Service-Attack (where a malicious user nationally creates transactions that will fail and consume network resources). 

One might wonder why can't users simply set limit prices equal to the most recent market price to prevent sandwich attacks? First, users can still be manipulated in the status quo such that they trade at their limit prices, and thus, they need to set limit prices close to the standalone price, and this can lead to high latency, with transactions failing to execute. In contrast, our positive result, Theorem~\ref{thm:greedy}, does not require users to use limit prices for protection (since if a transaction is a victim of a predatory trading strategy it still trades at a price at least as good as the standalone price).

If one aims to design a sequencing rule where the miner can  never obtain risk-free profits (risk-free profits meaning the miner is sure to receive some tokens for free), then Theorem~\ref{thm:impossibility} shows such a goal is unattainable. Thus, Theorem~\ref{thm:greedy} focuses on providing provable guarantees from the user's perspective. This is our main result, and ensures that if a self-interested miner includes a user's transaction in the block, then either the transaction executes with at a good execution price---as good as if the user's transaction was the only one in the block---or the miner does not gain by including the transaction. That is, a miner can profitably insert their own transactions, but only to the extent that the user's execution price is no worse than their standalone price (i.e., the price if they were the only transaction in the block). Although a user can still get a bad execution price, but in this case the miner provably does not profit from including the user's transaction. For example, if two users each wish to buy $q$ units of the same token, then in the absence of any other transactions, it is inevitable that the transaction that executes in second place pays a higher price. This is due to competition for the same token and not due to miner manipulation.

\subsection{Technical overview}

During a sandwich attack, a miner manipulates the state of the exchange in a way that causes one or more user transactions to achieve a worse execution price. We formalize the properties achieved by our sequencing rule by taking the price at the most recent state  of the liquidity reserves, $X_0 \in \mathbb R^2$, 
when a user submits a buy or sell order,  as a benchmark. This is a relevant benchmark because the blockchain consensus ensures that $X_0$ is not manipulable by the miner.\footnote{The miner could manipulate $X_0$ over multiple blocks, but we assume a different miner creates each block, or equivalently that miners are myopic. This assumption is well-motivated for a decentralized blockchain where the miner for a block is sampled from a large population and the miner selection mechanism is unpredictable so that the miner for a round is unknown until that round starts. This is a common assumption in the context of transaction fee mechanisms~\cite{ferreira2021dynamic, roughgarden2021transaction, lavi2022redesigning}. Note that nothing prevents us from updating the benchmark state $X_0$ less frequently to decrease the likelihood of manipulation. That is because manipulating $X_0$ over a long period of time increases the inventory risk for an attacker.}

Crucial for our sequencing rule is the observation that any liquidity pool exchange with two tokens satisfies the following duality property: {\em at any state $X \in \mathbb R^2$, it is either the case that (1) any buy order receives a better execution at $X$  than at $X_0$, or (2) any sell order receives a better execution at $X$  than at $X_0$}. Thus at any point during the execution of the orders in a block, as long as the transactions yet to execute are not all of the same type (i.e., not all buy orders or all sell orders), there is at least one order that would be happier to be the next order to execute compared with executing at the beginning of the block.

To be concrete, in defining our {\em Greedy Sequencing Rule}, let $T_1, T_2, \ldots, T_t$ be the execution ordering up to step $t$ of the current block; these are the transactions already added to the execution ordering. To define which transaction $T_{t+1}$ executes at step $t+1$, we simulate what the state $X_t$ would be after $T_1, T_2, \ldots, T_t$ executes and we add the constraint that $T_{t+1}$ must be a buy order (any buy order) if buy orders receive a better execution at $X_t$  than at $X_0$ and a sell order (any sell order) if sell orders receive a better execution at $X_t$  than at $X_0$. It is possible that at step $t+1$, only buy or sell orders are left to execute. In this case, we let the miner sequence the remaining orders as they wish. Importantly, this rule operates on any set of transactions included in a block and does not need knowledge as to whether a transaction comes from a user or the miner. This makes the rule verifiable based purely on information written on the blockchain.

Interestingly, when only buys or sells are left to execute, we will argue the miner is indifferent as to whether or not to include those in the block. That happens because the miner never profits from executing a buy (respectively sell) order after another buy (respectively sell) order, and would instead prefer to execute their transaction before this kind of order. Therefore, if $T_{t+1}$ is a user buy (or sell) order and all transactions that execute after $T_{t+1}$ are also buy (or sell) orders, the miner executes no order of their own after $T_{t+1}$ because they would prefer instead to execute their orders before $T_{t+1}$. Since the miner does not choose to subsequently include any of their own transactions once only buys or only sells of others remain to execute, then this implies that the miner does not gain from placing these  transactions, for example $T_{t+1}$, in the block. That is, there is no risk-free gain to the miner from including these transactions, i.e., no gain in tokens to the miner.

Let us see why the Greedy Sequencing Rule makes sandwich attacks unprofitable on Uniswap when there is a single user who wishes to purchase $q$ units of token $1$ at market price (i.e., without reporting a limit on how much they would pay) and the initial state in the block is $(X_1, X_2)$. For the setting where the miner can sequence orders as they wish, the miner can obtain a risk-free profit by first front-running the user and purchasing $w < X_1 - q$ units of token $1$. Then they execute the user's order at state $\left(X_1 - w, \frac{X_1 \cdot X_2}{X_1 - w} - X_2\right)$. After the user's order executes, the miner sells the $w$ units of token $1$ they purchased in the first step (at a higher price).

On the other hand, if the miner commits to implementing the Greedy Sequencing Rule, once the miner purchases $w$ units of token $1$, the miner is forced to execute any outstanding sell order (before executing the user's buy order). Thus the sequencing rule forces the miner to immediately sell the $w$ units of token $1$ they just purchased! One can easily check that no matter how many transactions the miner injects into the block, once constrained by the Greedy Sequencing Rule, the miner cannot obtain a risk-free profit when including only a single user transaction in the block. Interestingly, the miner can  obtain risk-free profits if the block contains three or more user transactions (as our impossibility result suggests), but without violating the guarantees of Theorem~\ref{thm:greedy}; i.e., any gains to the miner do not come at the expense of poorer execution price to users. Our work formalizes this intuition, giving results for any two-token liquidity pool exchange and for any number of user transactions included in a block.\footnote{With a single user transaction in the block, our greedy sequencing rule ensures the miner cannot obtain risk-free profits. Our impossibility constructs an attack on any sequencing rule when blocks contain three or more user transactions. We leave as an open question the case where blocks have two user transactions.}

\subsection{Related work}
\noindent{\bf Blockchain consensus.} Decentralized exchanges are one of the most impactful applications of decentralized blockchain technology.~\citet{nakamoto2008bitcoin} introduced the {\em Bitcoin} digital currency as the first use case for decentralized blockchains. The bitcoin blockchain uses a PoW ({\em Proof-of-Work}) longest-chain blockchain to implement a decentralized distributed computer for payments. No single entity owns the bitcoin system because anyone can volunteer to be a miner. The first miner to solve a computationally hard problem receives the privilege to change the state of the distributed computer and receive cryptocurrency in the form of Bitcoin tokens as a reward. Economic incentives play an important role in the security of decentralized blockchains. A line of work started by \citet{eyal2014majority} introduces selfish mining as a way for miners to improve their profit on longest chain PoW blockchains. \citet{sapirshtein2016optimal} and ~\citet{kiayias2016blockchain} provide guarantees for when  honest mining is a Nash equilibrium.

The assumption that miners are myopic---they do not manipulate prices across multiple blocks---is rooted on the assumption that a decentralized blockchain uses an unpredictable miner selection. Although longest chain {\em Proof-of-Work} (PoW) blockchains satisfy this assumption, they have very high energy consumption~\cite{cbeci}. On the other hand, longest chain {\em Proof-of-Stake} (PoS) blockchains~\cite{chen2019algorand, kiayias2017ouroboros, daian2019snow, king2012ppcoin} have a negligible energy cost, but their miner selection are sometimes predictable~\cite{brown2019formal}. Even for non-longest chain PoS blockchains, \citet{ferreira2022optimal} show that an adversary can bias the miner selection making the protocol predictable to a certain degree. \citet{ferreira2021proof} ask if longest chain PoS blockchains can provide similar miner selection guarantees as longest chain PoW. They show that when the blockchain has access to an external source of randomness (such as the NIST randomness beacon) longest chain PoS blockchains can provide similar (but strictly weaker) fairness guarantees (i.e., unpredictable and unbiased miner selection) than their PoW equivalent.

Incentive analysis in blockchain consensus often assumes a constant reward per block~\cite{ferreira2021proof, ferreira2022optimal, kiayias2016blockchain, sapirshtein2016optimal, eyal2014majority}. \citet{carlsten2016instability}, on the other hand, argue that transaction fees, when larger than block rewards, introduce a high variance in the revenue per block and can pose a risk to  blockchain security. \citet{qin2022quantifying} argue that DeFi applications can also disrupt miner incentives. They measure miner extractable value (MEV) from DeFi applications and quantify their risk to the blockchain security.


\vspace{1mm}\noindent{\bf Constant product automated market makers.} Uniswap is the highest trading volume liquidity pool exchange and uses the product potential~\eqref{eq:uniswap}. Their exchange is commonly referred as a constant product automated market maker. There is an underlying risk for providing liquidity to these exchanges, but liquidity providers receive trading fees as compensation. \citet{neuder2021strategic} and~\citet{heimbach2022risks} show that complex liquidity provision strategies can improve the liquidity provider's revenue and~\citet{fan2022differential} studies the tradeoffs between return to liquidity providers and gas fees to traders in the design of Uniswap v3 style schemes for differential price liquidity provision.

\vspace{1mm}\noindent{\bf Miner extractable value.} Our work assumes the miner is profit seeking. Alternatively, front-running on decentralized exchanges have been studied in the context of an honest miner and a self-interested user that attempts to front-run other users~\cite{heimbach2022eliminating, zhou2021high, kulkarni2022towards}. This adversarial model is strictly weaker than ours because a self-interested user has uncertainty over the execution ordering. 

\citet{heimbach2022eliminating} observe that, with trading costs (e.g., trading fees), users can limit their trading volume driving front-running schemes unprofitable. However, their approach is, in effect, limited to a small number of transactions in a block. Otherwise, the adversary can combine multiple transactions by executing them in sequence. For example, if $n>1$ buy orders each have volume $q>0$, then the adversary executes all $n$ orders in sequence which, for all purposes, is equivalent to a single order of volume $n \cdot q$. Thus, there is a sufficiently large $n$ where a front-running scheme remains profitable while unprofitable if executed only on individual transactions. On the other hand, our approach works even if the miner is profit seeking, the number of user transactions per block is unbounded, the trading volume for any particular transaction is arbitrarily large, and there are no trading costs.

\vspace{1mm}\noindent {\bf Mechanism design with imperfect commitment.} Traditional mechanism design assumes that the entity running the mechanism can commit to  the rules of the game. Front-running schemes would not be a concern if the miner could commit to ordering transactions in the same order as they were observed, i.e., without introducing their own transactions after observing user transactions and interspersing them with suitably ordered user transactions. Unfortunately, one cannot enforce such a sequencing rule because it is not verifiable: in the presence of latency, different miners could observe transactions in different orders~\cite{kelkar2020order}. Then the miner has plausible deniability to act on privileged information---and include their own transactions after learning about the user transactions.

Away from the design of mechanisms for decentralized exchanges, this challenge with imperfect commitment is  an important constraint in the design of transaction fee mechanisms for blockchains~\cite{ferreira2021dynamic, lavi2022redesigning, roughgarden2021transaction}. These are the mechanisms that determine which transactions win the right to be executed on a blockchain and enter a block. In auction theory, the inability of the auctioneer to commit to implementing a particular auction rule has been studied through the theory of {\em credible auction design}~\cite{akbarpour2020credible}. This considers ways in which an auctioneer might usefully deviate from an intended rule, but only allowing for deviations that are undetectable by the participants of an auction. For example, an auctioneer can introduce their own bid in a second-price auction to increase the second price, but cannot charge a winner more than their bid price in a first-price auction (and, first-price but not second-price auctions are credible). By running an auction over the internet, it is easy for auctioneers to deviate from the promised auction by, for example, bidding on their own auction with a fake identity. Prior work~\cite{ferreira2020credible, essaidi2022credible, chitra2023credible} has propose computationally and communication-efficient auctions that are truthful and credible under standard cryptographic assumptions and assumptions on bidder valuations.

\subsection{Paper organization}

Our results are not limited to exchange designs such as Uniswap that make 
use of the constant product potential, and  apply to a large class of liquidity pool
decentralized exchanges. We provide the necessary background and introduce a general model for liquidity pool decentralized exchange in Section~\ref{sec:background}. In Section~\ref{sec:model}, we introduce the communication model. In Section~\ref{sec:manipulation}, we show front-running schemes can be profitable for a large class of decentralized exchanges. There we also motivate our impossibility result (Theorem~\ref{thm:impossibility}). In Section~\ref{sec:sequencer}, we define the {\em Greedy Sequencing Rule} and prove our main result, Theorem~\ref{thm:greedy}. We conclude in Section~\ref{sec:conclusion}. Appendix~\ref{app:math} contains the necessary mathematical background. The remaining appendices contain omitted proofs.

\section{Background}\label{sec:background}

The exchange has a {\em state} $X = (X_1, X_2)$ where $X_i \geq 0$ is the current reserves of tokens $i \in \{1, 2\}$. Let $\{e_1, e_2\}$ be the {\em standard basis} of $\mathbb R^2$ which allow us to rewrite $X = X_1 \cdot e_1 + X_2 \cdot e_2$.

A user submits a transaction that either buys or sells token $1$. A buy order $\Buy(q, p)$ purchases $q$ units of token $1$ for at most $p \cdot q$ units of token $2$. A sell order $\Sell(q, p)$ sells $q$ units of token $1$ for at least $p \cdot q$ units of token $2$. We refer to $p$ as the {\em limit price}. An order is a {\em market order} if $p = \infty$ for a buy order and $p = 0$ for a sell order, and for a market order we omit $p$ and write $\Buy(q) := \Buy(q, \infty)$ and $\Sell(q) := \Sell(q, 0)$.

To define the outcome of a transaction, we endow the exchange with a {\em potential function} $\phi : \mathbb R_{\geq 0}^2 \to \mathbb R_{\geq 0}$, which is a real-valued continuous function that takes a state $X$ and maps to the potential $\phi(X) \geq 0$. We assume $\phi$ is {\em strictly increasing} and {\em quasiconcave} as follows:

\begin{definition}[Increasing function]
For a function $f$ we refer to $\dom(f)$ as the domain of $f$. For $x, y \in \mathbb R^n$, we write $x \geq y$ to denote $x_i \geq y_i$ for all $i \in [n]$. A real-valued function $f$ is {\em increasing} if for all $x, y \in \dom(f)$, we have that $x \geq y$, $f(x) \geq f(y)$. Moreover, $f$ is {\em strictly increasing} if for all $x, y \in \dom(f) \subseteq \mathbb R^n$ such that $x \geq y$ and $x_i > y_i$ for some $i \in [n] = \{1, 2, \ldots, n\}$, we have that $f(x) > f(y)$.
\end{definition}

\begin{definition}[Convex Set]
A set $D \subseteq \mathbb R^n$ is {\em convex} if for all $x, y \in D$ and $\alpha \in [0, 1]$, the linear combination $\alpha \cdot x + (1-\alpha) \cdot y \in D$.
\end{definition}

\begin{definition}[Quasiconcave Function]
A real-valued function $f$ is {\em quasiconcave} if $\dom(f)$ is a convex set and for all $x, y \in \dom(f)$, and all $\alpha \in [0, 1]$, we have that $f(\alpha \cdot x + (1-\alpha) \cdot y) \geq \min\{f(x), f(y)\}$.
\end{definition}

The {\em execution price} of an order is the buying price in the case of a buy order, or the selling price in the case of a sell order. We define the execution price algorithmically using the potential function and the current state. For a buy order $\Buy(q)$ executing at state $X$, the function $Y(X, \Buy(q))$ denotes the amount of token $2$ a user would pay for $q$ units of token $1$, which we define as
\begin{align}\label{eq:payment}
    Y(X, \Buy(q)) := \min\{y \geq 0: \phi(X - q \cdot e_1 + y \cdot e_2) \geq \phi(X)\}.
\end{align}

For a sell order $\Sell(q)$ executing at state $X$, the function $Y(X, \Sell(q))$ denotes the amount of token $2$ a user would trade for $q$ units of token $1$, which we define as
\begin{align}\label{eq:withdrawn}
    Y(X, \Sell(q)) := \max\{y \leq X_2 : \phi(X + q \cdot e_1 - y \cdot e_2) \geq \phi(X)\}.
\end{align}

Although $Y(X, \Buy(q))$ (or $ Y(X, \Sell(q))$)  define the execution price at state $X$, we introduce some feasibility constraints to determine if an order will successfully execute or fail. First, an order must not turn the liquidity reserves negative. Second, the potential at the next state must be the same as the previous state. Third, the user must pay at most $q \cdot p \cdot e_2$, in the case of a buy order $\Buy(q, p)$, or receive at least $q \cdot p \cdot e_2$, in the case of a sell order $\Sell(q, p)$. Formally, a buy order $\Buy(q, p)$ can {\em successfully execute at $X$} if and only if,
\begin{align*}
    &Y(X, \Buy(q)) \leq p \cdot q,\\
    &\phi(X_1 - q, X_2 + Y(X, \Buy(q))) = \phi(X),\\
    &X_1 \geq q.
\end{align*}

A sell order $\Sell(q, p)$ can {\em successfully execute at $X$} if and only if
\begin{align*}
&Y(X, \Sell(q)) \geq p \cdot q,\\ &\phi(X_1 + q, X_2 - Y(X, \Sell(q)) = \phi(X),\\
&X_2 \geq Y(X, \Sell(q)).
\end{align*}

If $\Buy(q, p)$ can successfully execute at $X$, the user trades $Y(X, \Buy(q)) \cdot e_2$ for $q \cdot e_1$. Similarly, if $\Sell(q, p)$ can successfully execute at $X$, the user trades $q \cdot e_1$ for $Y(X, \Sell(q)) \cdot e_2$. We summarize the order of operations for the execution of an order on state $X_{t-1}$ in Algorithm~\ref{alg:execution}.

\newpage
\begin{flushleft}
\begin{framed}
\begin{center} \bf Order Execution \end{center}
\ \\
{\bf Input:} Current state $X_{t-1}$; order $A = \Buy(q, p)\, |\, \Sell(q, p)$.

{\bf Output:} Next state $X_t$.

\begin{enumerate}
    \item If $A$ cannot successfully execute at $X_{t-1}$, {\em abort} the execution of $A$. The subsequent state is $X_t = X_{t-1}$.
    
    \item If $A$ can successfully execute at $X_{t-1}$:
    \begin{enumerate}
        \item If $A$ is a buy order, the user deposits $Y(X_{t-1}, A)$ units of token $2$ and withdraws $q$ units of token $1$.
        
        \item If $A$ is a sell order, the user deposits $q$ units of token $1$ and withdraws $Y(X_{t-1}, A)$ units of token $2$. 
        
        \item The subsequent state is 
        $$X_t = \begin{cases} 
        X_{t-1} - q \cdot e_1 + Y(X_{t-1}, A) \cdot e_2 \qquad &\text{if $A$ is a buy order,}\\
        X_{t-1} + q \cdot e_1 - Y(X_{t-1}, A) \cdot e_2 \qquad &\text{if $A$ is a sell order.}
        \end{cases}$$
    \end{enumerate}
\end{enumerate}
\end{framed}
\captionof{algorithm}{The execution of a $\Buy(q, p)$ or $\Sell(q, p)$ order at state $X_{t-1}$.} 
\label{alg:execution}
\end{flushleft}

One benefit of liquidity pool exchanges is their computational efficiency, since Algorithm~\ref{alg:execution} executes in constant time for many choices of $\phi$. For example, computing $Y(X_{t-1}, A)$ for the product potential function of Uniswap requires only a constant number of algebraic operations.

Observe a transaction only successfully executes at state $X$ if the next state has the potential $\phi(X) = c$. Thus the exchange will always be in a state contained in the level set $L_c(\phi)$ which we refer as the collection of {\em reachable states}.

\begin{definition}[Level sets]
Let $c \in \mathbb R$ be a constant. A {\em level set} $L_c(f)$ of real-valued function $f$ is the collection of points $x \in \dom(f)$ where $f(x) = c$. A {\em superlevel set} $S_c(f)$ of $f$ is the collection of points $x \in \dom(f)$ such that $f(x) \geq c$.
\end{definition}

\subsection{Examples of potential functions}\label{sec:example}

This section provides formal definitions for some of the potential functions that are used in practice. However, our results  hold for a larger class of potential functions, of which the ones defined here are illustrative.

Uniswap~\cite{adams2021uniswap} uses a product potential function to implement a liquidity pool exchange with two tokens. Balancer~\cite{martinelli2019non} uses a similar design but supports pools with two or more tokens. 

A concern is that liquidity pool exchanges based on product potentials can have high price volatility when the reserves are small relative to trade volumes. To address this concern, Curve~\cite{egorov2019stableswap} uses a potential function that aims to provide lower price volatility by assuming token prices are stable.
We describe these models next.
\medskip

\noindent\textbf{Product potential.}  The {\em product potential function} $\phi$ maps a state $X$ to the product of the current deposits
\begin{equation}\label{eq:product}\phi(X) = X_1 \cdot X_2.\end{equation}

An exchange with the product potential is also called {\em constant product automated market maker} (CPAMM), referencing the fact the product of the reserves is invariant while liquidity providers neither add nor remove liquidity. 
In a CPAMM, the prices implied by the current state $X$ remain in rough correspondence with the trading prices in a secondary market such as Coinbase. For example, suppose each unit of token $i$ is worth $p_i$ U.S.~dollars in the secondary market. We say $p_i/p_j$ is the {\em relative price} of token $i$ with respect to token $j$. Then we say a CPAMM is {\em in equilibrium} if $X_1/X_2 = p_1/p_2$; otherwise, arbitrageurs would have an incentive to trade in the exchange and take profits in the secondary market. See \citet{angeris2020improved, angeris2022optimal} for a discussion on the role of arbitrage in automated market makers.
\medskip

\noindent\textbf{Stable potential.} 
Potential functions that imply small variation in prices for even large trades are popular in liquidity pool exchanges in the case that the underlying tokens are expected to have a stable price. For example, {\em stablecoins} such as USDC and USDT aim to have a 1-to-1 parity with the U.S.~dollar.\footnote{These coins are successful, with a market cap of over 150 billion U.S.~dollars as of 2022.~\cite{coinmarketcap} USDT and USDC are known as collateralized stable coins, and are issued by entities that promise that users are always able to redeem 1 unit of USDT or USDC for 1 U.S.~dollar. This suggests that these stable coins should always have a 1-to-1 parity with the U.S.~dollar. Another class of stablecoins known as an algorithmic stablecoin is not collateralized. Instead, they rely on incentive mechanisms, and these have so far failed to hold their 1-to-1 parity during periods of market turbulence. Notably, the algorithmic stablecoin UST had a market cap of over \$50 billion U.S.~dollars when it collapsed overnight in 2022 due to a bank run.}
The {\em stable potential} achieves this by combining  the product potential function with the {\em additive potential function},  defined as
\begin{equation}\label{eq:additive}\phi(X) = X_1 + X_2.\end{equation}

With just the additive potential function, 
the price for token $1$ with respect to $2$ would always
equal $1$ and the exchange would be unstable because if
the prices in a secondary market are $p_1 \neq p_2$, arbitrageurs would have an incentive to trade the token with the lowest price for the token with the highest price until all liquidity reserves are depleted. 
The stable potential addresses this by interpolating 
between the additive and the product potentials:
\begin{equation}\label{eq:stable}
\phi(X) = \frac{X_1 \cdot X_2}{\left(\frac{X_1 + X_2}{2}\right)^2}(X_1 + X_2) + \left(1-\frac{X_1 \cdot X_2}{\left(\frac{X_1 + X_2}{2}\right)^2}\right)X_1 \cdot X_2.
\end{equation}

To see that~\eqref{eq:stable} interpolates correctly, it suffices to check that $0 \leq \frac{X_1 \cdot X_2}{\left(\frac{X_1 + X_2}{2}\right)^2} \leq 1$. The lower bound is clear since $X_i \geq 0$ for  $i\in \{1,2\}$. The upper bound follows from the {\em AM-GM inequality}, i.e., $X_1 \cdot X_2 \leq \left(\frac{X_1 + X_2}{2}\right)^2$ (Lemma~\ref{lemma:am-gm}). Note the inequalities are also tight since the lower bound is attained whenever  $X_i = 0$  for some $i\in \{1,2\}$,
and the upper bound is attained whenever $X_1 = X_2$.

In the case that an exchange uses the stable potential and the prices in the secondary market are $p_1 = p_2$, arbitrageurs would have an incentive to trade in the liquidity pool whenever $X_1 \neq X_2$. Hence the only equilibrium  has  $X_1 = X_2$, and the stable potential behaves closer to the additive potential, as desired.

\subsection{Model discussion}\label{sec:model-discussion}

This section argues why our assumptions on potential functions are, in essence, without loss. Firstly, the potential must be strictly increasing because this is equivalent to requiring that users should only be able to withdraw tokens from the exchange if they deposit some payment:
\begin{definition}
A potential $\phi$ has {\em non-zero payment} if for all states $X \in \dom(\phi)$, we have that $Y(X, \Sell(0)) = 0$.
\end{definition}

One can check that all the potential functions from Section~\ref{sec:example} when restricted to states $X > 0$ have non-zero payment. The product potential does not have zero payments if the domain contains some state $X$ where $X_i = 0$ for some $i$. To show an exchange satisfies non-zero payment, Lemma~\ref{lemma:zero-payment} shows that it suffices to check that $\phi$ is strictly increasing with the mild assumption that $\dom(\phi)$ is open and upward closed.

\begin{definition}[Open and upward closed sets]\label{def:open}
Let $B_\varepsilon(x) = \{y \in \mathbb R^n : ||x - y|| \leq \varepsilon\}$ be the a ball around $x$. A set $D \subseteq \mathbb R^n$ is {\em open} if for all $x \in D$, there is $\varepsilon > 0$ such that for all $y \in B_{\varepsilon}$, we have that $y \in D$. A set $D \subseteq \mathbb R^n$ is {\em upward closed} if for all $x \in D$, if $y \geq x$, then $y \in D$.
\end{definition}

\begin{lemma}\label{lemma:zero-payment}
Let $\dom(\phi)$ be open and upward closed. A potential $\phi$ has non-zero payment if and only if $\phi$ is strictly increasing.
\end{lemma}

\begin{proof}
We prove the lemma in two parts. 
\begin{claim}
If $\phi$ is strictly increasing, then $\phi$ has non-zero payment.
\end{claim}
\begin{proof}
Fix any $X \in \dom(\phi)$. Fix $\varepsilon > 0$ and $X' = X - y \cdot e_2$ for any $y \in (0, \varepsilon]$. Because $\dom(\phi)$ is open, there is a $\varepsilon > 0$ such that $X' \in \dom(\phi)$. Because $\phi$ is strictly increasing and $X$ is strictly bigger than $X'$, we conclude $\phi(X') < \phi(X)$. This proves $Y(X, \Sell(0)) \leq y \leq \varepsilon$. Taking the limit as $\varepsilon \to 0$ proves that $Y(X, \Sell(0)) = 0$ as desired.
\end{proof}

\begin{claim}
If $\phi$ has non-zero payment, then $\phi$ is strictly increasing.
\end{claim}
\begin{proof}
Fix any $X \neq X' \in \dom(\phi)$ and w.l.o.g. assume $X \geq X'$. Let $p = X - X' \geq 0$. Define $Z_j = X - \sum_{i = 1}^j p_i \cdot e_i$ for all $j$ and observe $Z_0 = X$ and $Z_2 = Z'$. Note $Z_j \in \dom(\phi)$. To see, observe $Z_j \geq Z_2 = Z'$ and $Z' \in \dom(\phi)$ which combined with the assumption $\dom(\phi)$ is upward closed implies $Z_j \in \dom(\phi)$. Now observe that $\phi(Z_j) < \phi(Z_{j-1})$ for all $j$ where $p_j > 0$; otherwise, the event $\phi(Z_j) \geq \phi(Z_{j-1})$ implies $Y(Z_{j-1}, \Sell(0)) \geq p_j > 0$, a contradiction to the assumption $\phi$ has non-zero payment. Note there is at least one $p_j > 0$ because $X \neq X'$. This proves $\phi(X) = \phi(Z_0) > \phi(Z_2) = \phi(X')$ as desired. Thus $\phi$ is strictly increasing.
\end{proof}
Combining both claims proves Lemma~\ref{lemma:zero-payment}.
\end{proof}

Our second assumption is that potential functions are quasiconcave. In Lemma~\ref{lemma:pricing}, we show that assuming $\phi$ is quasiconcave and strictly increasing ensures that buying token $1$ only increases the price of token $1$ relative to token $2$ and selling token $1$ only decreases the price of token $1$ relative to token $2$.

\begin{lemma}[Pricing Lemma]\label{lemma:pricing}
Consider states $X$ and $X'$ where $\phi(X) = \phi(X')$ and $X_1' < X_1$ and assume the potential function $\phi$ is quasiconcave and strictly increasing. Then the following hold:
\begin{itemize}
\item If $\Buy(q)$ can successfully execute at both $X$ and $X'$, then $Y(X, \Buy(q)) \leq Y(X', \Buy(q))$.

\item If $\Sell(q)$ can successfully execute at both $X$ and $X'$, then $Y(X, \Sell(q)) \leq Y(X', \Sell(q))$.
\end{itemize}
\end{lemma}

We provide the proof of the Pricing Lemma in Appendix~\ref{app:pricing} which follows from first principles.

\section{Communication Model}\label{sec:model}

The risk of market manipulation in liquidity pool exchanges arrives from how users communicate their transactions with the exchange. Suppose  users $1, 2, \ldots, |A|$ want to execute transactions $A_1, A_2, \ldots, A_{|A|}$ at state $X_0$. Note a single entity could control multiple users, but that is not relevant for our analysis. Each user privately sends their transaction to the miner. The miner aggregates observed transactions into a block $B$, which we model as a set of potentially unbounded size.

The order of transactions in the block defines the {\em execution ordering}---the order by which transactions execute in the decentralized exchange. In our model,  miners pick the block (i.e., the transactions to include), but use a {\em sequencing rule} $S$ to determine the execution ordering.

\begin{definition}[Sequencing Rule]
A {\em sequencing rule} $S$ is a function from a state $X$ (of the liquidity reserves before any transaction executes) and a set of transactions $B$ to a non-empty set system $S(X, B)$ containing permutations of $B$.
\end{definition}

First, we would like a sequencing rule to be efficiently computable in order to minimize the computational burden on miners.
\begin{definition}[Efficient Sequencer]
A sequencing rule $S$ is (computationally) {\em efficient}, if for all initial state $X=(X_1,X_2)$ and block $B$, there is an algorithm that takes $(X_0, B)$ and outputs some $T \in S(X_0, B)$ in time $O(\log(X_1 + X_2) |B|)$.
\end{definition}

Any block sequencing algorithm requires at least $\log(X_1 + X_2) |B|$ computation to read the content of $B$. Thus our definition requires that a sequencing rule imposes at most a constant multiplicative computational overhead when compared with the status quo, i.e., where the miner computes their favorite ordering of $B$.

We are ready to define the {\em trading game} $(X_0, \{A_i\}, S)$ between users and a miner. The game takes as input a transaction, $A_i$, from each user $i$, the initial state $X_0$, and a sequencing rule $S$. The outcome of the game is an execution ordering on a set of transactions and associated sequence of states, where the transactions that are ordered can include a subset of user transactions and additional transactions that may be introduced by the miner. In the case of an honest miner, the game proceeds as in Algorithm~\ref{alg:trading}:

\newpage
\begin{flushleft}
\begin{framed}
\begin{center} \bf Ideal Trading Game \end{center}
\ \\
{\bf Input:} Initial state $X_0$; order $A_i$ from each user $i$; sequencing rule $S$.

{\bf Output:} Execution ordering $T = (T_1, \ldots, T_{|T|})$ and states $X_1, X_2, \ldots, X_{|T|}$ where $T_t$ executes at $X_{t-1}$. 

Proceed as follows:
\begin{enumerate}
    \item The miner initializes the block $B = \emptyset$.
    
    \item For all $i$, user $i$ privately sends order $A_i$ to the miner.
    
    \item For all $i$, the miner adds $A_i$ to $B$.
    
    \item The miner picks some $(T_1, \ldots, T_{|B|}) \in S(X_0, B)$ as the execution ordering.
    
    \item For $t = 1, \ldots |B|$, 
    \begin{enumerate}
        \item The exchange executes $T_t$ at state $X_{t-1}$.
        \item Let $X_t$ be the state after $T_t$ executes on $X_{t-1}$.
    \end{enumerate}
\end{enumerate}
\end{framed}
\captionof{algorithm}{Trading game with an honest miner.}
\label{alg:trading}
\end{flushleft}

Algorithm~\ref{alg:trading} assumes the miner commits to implement all the steps faithfully.
However, we assume miners can perform the following deviations in the real trading game, these comprising the {\em strategy} of the miner, with knowledge of the input:
\begin{enumerate}
    \item At step (3), by censoring transaction $A_i$ (not adding $A_i$ to block $B$).
    
    \item Between step (3) and (4), by including their own transactions into $B$.
    
    \item At step (4), by picking an execution ordering $T \not \in S(X_0, B)$.
\end{enumerate}

We assume a self-interested miner who will follow any set of deviations that are profitable and undetectable by an observer. For this, we assume that the 
observer can only see the blockchain state (which includes $X_0$ and the outcome of the trading game). By assuming an observer can see the outcome but not the set of outstanding orders, we avoid introducing (strong) assumptions over the communication channel. For example, a naive approach to mitigate censorship is to suggest that users broadcast their transaction. Thus, an observer who sees order $A_i$ must conclude that the miner also saw $A_i$. If the miner did not include $A_i$ into $B$, then the observer can conclude the miner censored $A_i$. Unfortunately, it is not possible to confirm that a miner receives a transaction in the presence of latency. The miner can simply refuse to acknowledge the receipt of a message, with latency providing plausible deniability. Moreover, a malicious user could send $A_i$ to the observer and not to the miner, harming the miner's reputation. To avoid these concerns, we assume our observer can only rely on information stored in the blockchain to detect a miner deviation.
\begin{definition}[Safe deviation]
The outcome of the trading game $(X_0, \{A_i\}, S)$ with an honest miner is $(X_0, T)$ where $T \in S(X_0, \{A_i\})$. A {\em deviation} for the miner is a strategy that results in an outcome $(X_0, T)$ where $T$ is not necessarily contained in $S(X_0, \{A_i\})$. A deviation resulting in an outcome $(X_0, T)$ is {\em safe} if there is a trading game $(X_0, \{A_i'\}, S)$ where $T \in S(X_0, \{A_i'\})$.
\end{definition}

In other words, a deviation for the miner is safe if it can be explained by a different set of user transactions $\{A_i'\}$. For a given sequencing rule, we define the {\em language},
\begin{equation}\label{eq:language}
\lang(X_0, S) = \{(X_0, T) : \{A_i\}, T \in S(X_0, \{A_i\})\},
\end{equation}
which is the set of all possible outcomes for the sequencer $S$ with initial state $X_0$.
An important constraint for a sequencing rule is that an observer must be able to efficiently check whether $(X_0, T) \in \lang(X_0, S)$, for any outcome $(X_0,T)$. This can be formalized as follows:
\begin{definition}[Verifiable sequencing rule]
A sequencing rule with language $\lang$ is {\em verifiable} if there is a polynomial time algorithm $P$ that takes any outcome $(X_0, T)$ and outputs $\true$ if $(X_0, T) \in \lang(X_0, S)$ or $\false$ if $(X_0, T) \not\in \lang(X_0, S)$.
\end{definition}

\subsection{User and miner utility}
Consider an outcome $(X_0, T)$. Let $T_{\sigma(i)}$ denote the transaction owned by user $i$ in the execution order $T$, which is undefined if the miner censors user $i$. Then, we define {\em user $i$'s utility} for the outcome as
$$u_i(X_0, T) = \begin{cases}(0, 0) & \qquad \text{if $\sigma(i)$ is undefined, and}\\
X_{\sigma(i)} - X_{\sigma(i)-1} & \qquad \text{otherwise.}
\end{cases}$$

This is the difference in tokens owned by the user that correspond to the executed transaction, or no change in the case that the user's transaction is not executed. For an entity that controls users $i$ and $j$, their utility would be $u_i(X_0, T) + u_j(X_0, T)$.

Let $I \subseteq \{1, \ldots, |T|\}$ be the time steps where one of the miner's transaction executes (perhaps empty). Then the {\em miner's utility} is
$$u_0(X_0, T) = \sum_{t \in I} (X_t - X_{t-1}).$$

For outcomes $(X_0, T)$ and $(X_0, T')$, we say {\em $(X_0, T)$ dominates $(X_0, T')$ for agent $i$} if $u_i(X_0, T) \geq u_i(X_0, T')$. The outcome is a {\em risk-free execution} for agent $i$ if $u_i(X_0, T) \geq 0$. A risk-free execution is {\em profitable} for agent $i$  if the agent has  a strictly positive quantity of some token and a non-negative quantity of the other token. A common feature of a front-running scheme is that the miner 
has an execution ordering and set of transactions to insert that provides
it with a profitable, risk-free execution  (with $0 \neq u_0(X_0, T) \geq 0$). 
\begin{example}
Let $u_i(X_0, T)$ be the utility of agent $i$. If $u_i(X_0, T) = (-1, 1)$, then $(X_0, T)$ is not a risk-free execution for agent $i$, because agent $i$ pays $1$ unit of token $1$. If $u_i(X_0, T) = (0, 0)$, then $(X_0, T)$ is a risk-free execution for agent $i$, but not a profitable one. If $u_i(X_0, T) = (1, 0)$, then $(X_0, T)$ is a profitable, risk-free execution for agent $i$.
\end{example}

\section{Market Manipulation}\label{sec:manipulation}

One desirable property for an exchange with the product and stable potential functions is that a user cannot completely deplete token $1$ reserves because they would need to deposit an unbounded quantity of token $2$. We can formalize this property as follows.
\begin{definition}[Liquidity-preserving]
An exchange is {\em liquidity-preserving} if, for all states $X > 0$, we have
$$\lim_{q \to X_1} Y(X, \Buy(q)) = \infty.$$
\end{definition}

Unfortunately, users interacting with any liquidity-preserving decentralized exchange are vulnerable to front-running if the miner can arbitrarily choose an execution ordering.
We first illustrate this in the following example for the product potential. Although the example focuses on buy orders, sell orders can also be victims of front-running.
\begin{example}[Front-running, product potential]
Consider the product potential, $\phi(X) = X_1 \cdot X_2 = c$, and suppose a  user submits a market order $T_u = \Buy(q)$. Recall the set of reachable states is the level set $L_c(\phi)$. Also observe that $X_1$ uniquely determines $X_2$ given $c$ (because $\phi$ is strictly increasing, Lemma~\ref{lemma:generator}). In the front-running scheme from Figure~\ref{fig:front-running}, the miner sequences their own buy order, $T_1$, before $T_u$ and their own sell order, $T_3$, after $T_u$. This is a sandwich attack. The effect of ``running ahead" of the user's transaction is that the price increases after the miner's buy order. As a result, this is a profitable risk-free manipulation. For the user, the effect is a worse execution price.
\label{ex:sand}
\end{example}
\begin{figure}
    \centering
    \includegraphics[width=\linewidth]{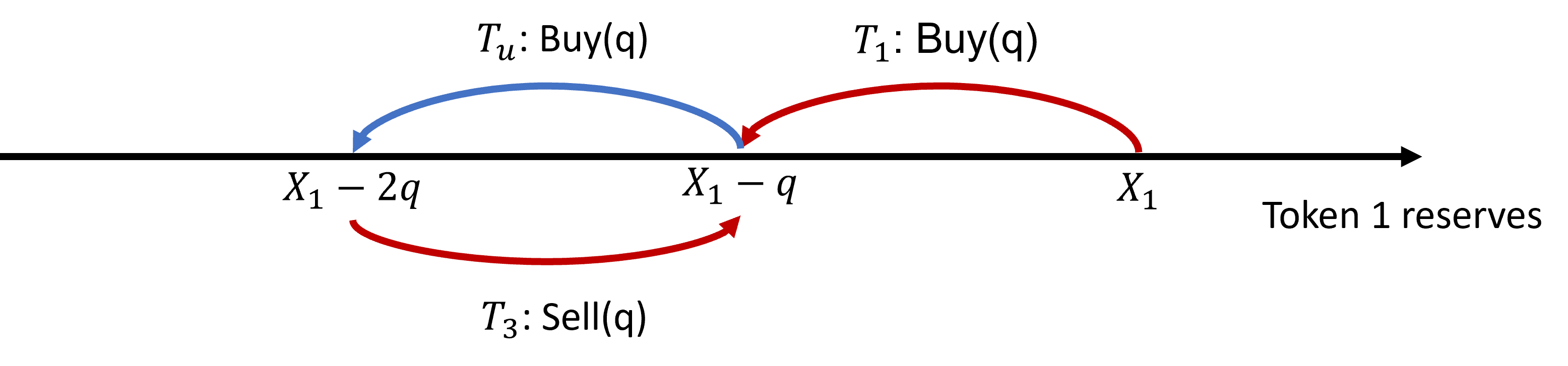}
    \caption{A front-running scheme (a sandwich attack) with execution ordering $(T_1, T_u, T_3)$, where $T_u$ is the user's transaction. The miner's orders,  $T_1$ and $T_3$, 
    are in red, and the user's order, $T_u$, is in blue.}
    \label{fig:front-running}
\end{figure}

Theorem~\ref{thm:pump-dump} generalizes Example~\ref{ex:sand} to the case of a user's transaction that includes a limit price, and for a general exchange rule (not just the product potential), and shows that the miner can obtain a profit that increases as the user increases the limit price. 
\begin{theorem}\label{thm:pump-dump}
Consider a  buy order $\Buy(q, p)$, for quantity $q$ at limit price $p$, and assume this is feasible at state $X$. Assume the exchange is liquidity-preserving. Then there is an execution ordering where the user swaps $q \cdot p$ units of token $2$ for $q$ units of token $1$, and the miner receives $q \cdot p - Y(X, \Buy(q))$ units of token $2$ for free.
\end{theorem}
\begin{proof}
Because $\Buy(q, p)$ is feasible at state $X$, it satisfies the constraints $q \leq X_1$ and $Y(X, \Buy(q, p)) \leq q \cdot p$. Instead of executing only $\Buy(q, p)$ at $X$, the miner injects their own buy and sell by picking the execution ordering
$$(\Buy(X_1-q-w), \Buy(q, p), \Sell(X_1-q-w)).$$
The miner picks the smallest $w \geq 0$ that still allows the user transaction to successfully execute. Observe such a constant exists because setting $w = X_1 - q$ ensures the user transaction successfully executes.

Next, we claim the user pays exactly $q \cdot p$ for this value of $w$. To see this, let $x \geq q$, and observe that the liquidity-preserving assumption implies
$$\lim_{x \to X_1} Y(X, \Buy(x-q)) + Y(X(x), \Buy(q)) = \lim_{x \to X_1} Y(X, \Buy(x)) = \infty,$$
where $X(x)$ is the state after $\Buy(x-q)$ executes on $X$. Thus when $p = w = 0$, the payment of the user can be unbounded since $\lim_{x \to X_1} Y(X(x), \Buy(q)) = \infty$. Moreover, when $w = X_1 - q$, the user pays at most $Y(X, \Buy(q, p)) \leq q \cdot p$. The fact $\phi$ is continuous implies the function representing the user payment as a function of $w$ is continuous. From the intermediate value theorem (Lemma~\ref{thm:intermediate-value}), we conclude there is a value $w \geq 0$ such that the user pays $q \cdot p$ units of token $2$.

The state after all transactions execute is $(X_1 - q, X_2 + Y(X, \Buy(q)))$. Because the user deposits $q \cdot p \cdot e_2$, the miner receives the difference $q \cdot p - Y(X, \Buy(q))$.
\end{proof}

In the case of the additive potential function~\eqref{eq:additive}, which is not liquidity-preserving, the deviation used by the miner in the proof of Theorem~\ref{thm:pump-dump} is not profitable and, in particular, the miner's utility is $(0, 0)$. Whenever $\Buy(q)$ successfully executes, the miner trades $q$ units of token $2$ for $q$ units of token $1$. Equivalently, whenever $\Sell(q)$ successfully executes, the miner trades $q$ units of token $1$ for $q$ units of token $2$. As a result, the miner's net utility from executing orders $\Buy(q)$ and $\Sell(q)$ is always zero regardless of their position in the execution ordering. Although robust to market manipulation, non-liquidity preserving exchanges  fail to execute transactions once  they have exhausted their liquidity reserves, making then unsuitable for settings where secondary-market prices fluctuate.

\subsection{Impossibility result}\label{sec:impossibility}

We already established (Theorem~\ref{thm:pump-dump})  that miners can obtain large profits when they pick the block content along with the execution ordering. Next, we ask what is possible if miners can only choose the block content, but not the execution ordering. One might ask if there is a sequencing rule $S$ where, for any block $\{B_i\}$ (containing miner and user transactions), and initial state $X_0$, there is
no execution ordering $T \in S(X_0, B)$ that gives risk-free profits for the miner.
Unfortunately, Theorem~\ref{thm:impossibility} shows that this is impossible even when $\{B_i\}$ contains only three user transactions. 
For simplicity, we state Theorem~\ref{thm:impossibility} under the assumption that the exchange uses the product potential (which suffices for an impossibility result); however, one can generalize the statement for any liquidity pool exchange that suffers price impact, i.e., the token 1 (resp. 2) price {\em strictly increases} as token 1 (resp. 2) reserves decreases.
\begin{theorem}\label{thm:impossibility}
Consider a liquidity pool exchange with the product potential. Then for any sequencing rule $S$, there is an initial state $X_0$ and a block $B$, containing miner transactions and a set of three user transactions, such that the miner receives a strictly positive quantity of token $2$ and pays nothing if the execution ordering is contained in $S(X_0, B)$.
\end{theorem}

The idea is to consider a block that contains $n \geq 3$ identical buy orders, $\Buy(2)$, and the same number of identical sell orders, $\Sell(1)$, where one of the buy orders  and two of the sell orders are the miner's. We assume  $X_1 > 4n$ in the initial state, so that no transaction ever fails to execute. We fix the execution ordering. Then because the execution ordering from a sequencing rule does not depend on whom owns which transaction, we choose which buys and sells the miner owns after observing the ordering.

It is clear that after the miner executes their orders, their utility in token $1$ is zero. It suffices to argue that for any permutation of the $2n$ transactions, there is always one buy order and two sell orders where the buy order buys at an average price $p$ and the two sell orders sells at an average price that is strictly larger than $p$. Letting the miner own these particular, three transactions, implies profit to the miner because there is a gain in token $2$ and no change in position for token $1$. See Figure~\ref{fig:impossibility} for one of the permutations, where if the miner owns orders $\{T_1, T_4, T_5\}$, they receive a positive quantity of token $2$ for free.
\begin{figure}
    \centering
    \includegraphics[width=\linewidth]{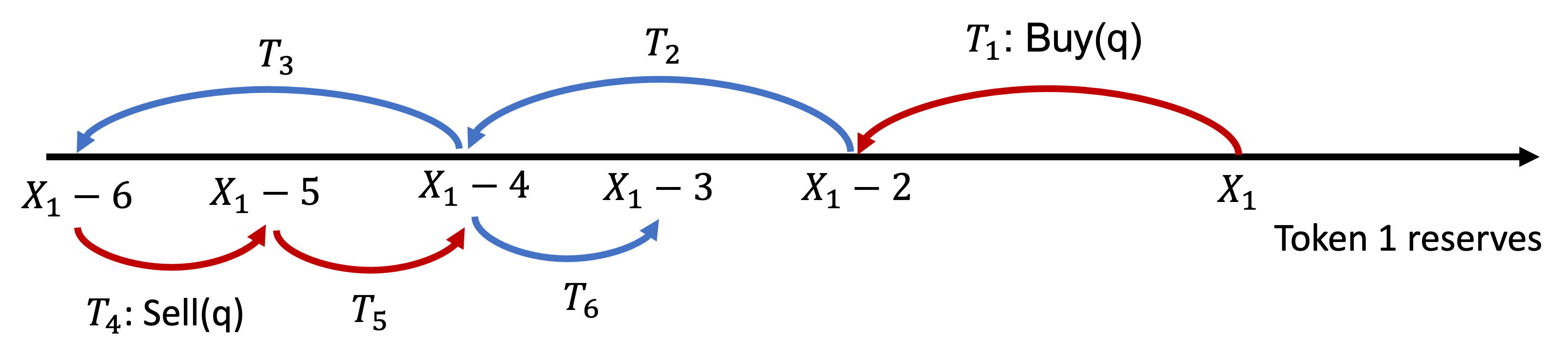}
    \caption{Example of a permutation where the miner obtain risk-free profits. Arrows pointing to the left are buy orders and arrows pointing to the right are sell orders. Miner orders are in red and user orders are in blue.
    \label{fig:impossibility}}
\end{figure}

\begin{proof}[Proof of Theorem~\ref{thm:impossibility}]
Let $X_{0,1} = X_{0, 2} = 4 \cdot n$, where $n \geq 3$. The miner picks a block $B$ that contains $n$ buy orders equal to $\Buy(2)$ and $n$ sell orders equal to $\Sell(1)$. One of the buys and two of the sells will be the miner's.  Let the execution order be $T \in S(X_0, B)$. Next, we will pick which particular buys and sells the miner owns in a way that depends on $T$.

Recall $X_t$ is the state after order $T_t$ executes on $X_{t-1}$. $X_{t, i}$ is the token $i$ reserve at state $X_t$. Let $Z_t = X_{t, 1} - X_{0, 1}$, and observe $Z_0 = 0$ and $Z_{|T|} = -n$ for any execution ordering. Moreover, $Z_t = Z_{t-1} + 1$ whenever $T_t$ is a sell order and $Z_t = Z_{t-1} - 2$ whenever $T_t$ is a buy order.

Let $i \in \arg\max_{t \in [|T|]}\{Z_{t-1} : \text{$T_t$ is a buy order} \}$, then $T_i$ is the buy order that pays the lowest price. Observe $Z_{i-1} = k$ for some $k \geq 0$ because $Z_0 = 0$ and $Z_{|T|} < 0$. Next consider the following cases:

\noindent{\bf Case 1.} {\em There are two sell orders $T_a$ and $T_b$ that execute at a state where $Z_{a-1} \leq (k-2)$ and $Z_{b-1} \leq (k-2)$, respectively.} If the miner owns orders $T_i = \Buy(2)$, $T_a = \Sell(1)$, and $T_b = \Sell(1)$, then the miner ends the trading game with a strictly positive quantity of token $2$ because they sell two units of token $1$ at a lower average price than they were purchased.

\noindent{\bf Case 2.} {\em There are $m \geq n-1$ sell orders that execute at states where $Z_{t-1} \geq k-1$.} Let $T_t$ be one of these sell orders. We first argue that $T_t$ executes at a state where $Z_{t-1} = k-1$. To see this, observe that if $Z_{t-1} > k-1$, then $Z_t = Z_{t-1} + 1 \geq k+1 > 0$. Because $Z_{|T|} = -n < 0$, there is a buy order $T_j$ that executes after $T_t$ where $Z_{j-1} \geq k+1$. We reach a contradiction, since $k = Z_{i-1} \geq Z_{j-1} \geq k+1$. This proves that if $T_t$ is a sell order executing a state where $Z_{t-1} \geq k-1$, then $Z_{t-1} = k-1$. This also implies that, for all time steps $t$, we have  $Z_t \leq k$. Now for any sell order $T_t$ where $Z_{t-1} = k-1$, we will have that $Z_t = k$. Then $T_{t+1}$ must be a buy order, and we will have that $Z_{t+2} = k-2$. Therefore, we can only have $m$ sell orders executing at states where $Z_{t-1} = k-1$ if there are $m$ sell orders executing at states where $Z_j = k-2$. Because $m = n-1 \geq 2$, we reach a contradiction to the assumption that at most one sell order executes at a state where $Z_{t-1} \leq k-2$. This proves 
that Case 2 never happens.

Cases 1 and 2 cover all scenarios. This proves that the miner always receives a strictly positive quantity of token $2$ and pays nothing.
\end{proof}

\section{Greedy Sequencing Rule}\label{sec:sequencer}

The Greedy Sequencing Rule (Algorithm~\ref{alg:greedy})
takes a set of transactions $B$ and an initial state $X_0$ (denoting the state before a transaction in this block executes on the chain), and recursively constructs an execution ordering $(T_1, \ldots, T_{|B|})$ (a permutation of the transactions in $B$). The set of transactions $B$ may include both user and miner transactions.
We refer to $(T_1, \ldots, T_t)$ as the execution ordering up to step $t$ and state $X_t$ denotes the state after $(T_1, \ldots, T_t)$ executes on state $X_0$.
We partition the set of outstanding transactions, that is the 
transactions in $B$ that are not in $\{T_1, \ldots, T_t\}$, 
into two groups: $B^\buy$ that contains the outstanding buy orders and $B^\sell$ that contains the outstanding sell orders. At step 0, $B = B^\buy \cup B^\sell$.

What would be a good choice for $T_{t+1}$? We suppose that when  user $i$ communicated their order $A_i$ to the miner, the user could observe $X_0$ as the current state of the exchange, and is  comfortable with $A_i$ executing at $X_0$. 
However, $A_i$ will execute at state $X_t$ if this forms 
the $(t+1)$-th transaction in the block. If $A_i$ is a buy order, the user prefers $Y(X_t, A_i)$ to be as small as possible. However, if $A_i$ is a sell order, the user prefers $Y(X_t, A_i)$ to be as large as possible. Our main observation is that as long as both $B^\buy$ and $B^\sell$ are not empty, there is at least one transaction that would be at least as happy when executing at $X_t$ as when executing at $X_0$. To be concrete, we show that if the  token 1 reserves at $X_t$ are smaller than those at 
$X_0$, then any sell order would be at least as happy by executing at $X_t$  than $X_0$. Conversely, if the $X_t$ token 1 reserves are higher than at $X_0$, then any buy order would be at least as happy by executing at $X_t$  than $X_0$. We formalize this property in the Duality Theorem (Theorem~\ref{thm:duality}) which we prove in Appendix~\ref{app:duality}.

This immediately suggests a good choice for $T_{t+1}$. The Greedy Sequencing Rule allows the miner to pick $T_{t+1}$ to be any buy order from $B^\buy$ if it is the buy orders that prefer to execute at $X_t$  than $X_0$; otherwise, the rule allows the miner to pick $T_{t+1}$ to be any sell order from $B^\sell$. Duality ensures that one of these conditions is satisfied as long as neither $B^\buy$ or $B^\sell$ are non-empty. Once one of $B^\buy$ or $B^\sell$ is empty, we allow the miner to append the remaining transactions in any arbitrary order (just as in the status quo). Interestingly, we will argue that from the moment that $B^\buy$ or $B^\sell$ are empty, the miner has nothing to profit (or lose) from manipulating the execution ordering going forward.

\newpage
\begin{flushleft}
\begin{framed}
\begin{center} \bf Greedy Sequencing Rule \end{center}
\ \\
{\bf Input:} Initial state $X_0$; set of transactions $B$.
    
{\bf Output:} Execution ordering $T$. 
    
    Proceed as follows:
\begin{enumerate}
    \item Initialize $T$ as an empty list.
    
    \item Let $B^\buy \subseteq B$ be the collection of buy orders in $B$. Let $B^\sell \subseteq B$ be the collection of sell orders in $B$.
    
    \item While $B^\buy$ and $B^\sell$ are both non-empty:
    \begin{enumerate}
        \item Let $t = |T|$.
        \item If $X_{t, 1} \geq X_{0, 1}$:
        \begin{enumerate}
            \item Let $A$ be any order in $B^\buy$.
            \item Append $A$ to $T$ and remove $A$ from $B^\buy$.
        \end{enumerate}
        \item Else:
        \begin{enumerate}
            \item Let $A$ be any order in $B^\sell$.
            \item Append $A$ to $T$ and remove $A$ from $B^\sell$.
        \end{enumerate}
        \item Let $X_{t+1}$ be the state after $A$ executes on $X_t$.
    \end{enumerate}
    
    \item If $B^\buy \cup B^\sell$ is non-empty, append the remaining transactions in $B^\buy \cup B^\sell$ to $T$ in any order.
\end{enumerate}
\end{framed}
\captionof{algorithm}{Greedy Sequencing Rule.}
\label{alg:greedy}
\end{flushleft}

\begin{definition}[Execution quality]
An order $A$ receives a {\em better execution} at state $X$  than $X'$ if either:
\begin{itemize}
    \item order $A$ fails to execute on $X'$, or
    \item order $A$ can successfully execute on both $X$ and $X'$ and if $A$ is a buy order, we have that $Y(X, A) \leq Y(X', A)$, else if $A$ is a sell order, we have that $Y(X, A) \geq Y(X', A)$.
\end{itemize}
\end{definition}

\begin{theorem}[Duality Theorem]\label{thm:duality}
Consider any liquidity pool exchange with potential $\phi$. For any pair of states $X, X' \in L_c(\phi)$, either:
\begin{itemize}
    \item any buy order receives a better execution at $X$  than $X'$, or
    \item any sell order receives a better execution at $X$  than $X'$.
\end{itemize}
\end{theorem}

The premise for a front-running scheme is that the miner manipulates the liquidity reserves to force a sell order to execute at a state $X_t$ where the token 1 reserves are higher than at $X_0$ and a buy order to execute at a state $X_t$ where the token 1 reserves are smaller than at $X_0$. In particular, if the miner wants  a sell order to execute at a state $X_t$ where the token 1 reserves are higher than at $X_0$, then the miner intends to execute a buy order immediately after this transaction. The Greedy Sequencing Rule avoids this pattern because it forces the miner's buy order to execute before the sell order.

To illustrate this, we consider an example with a single user order, $\Buy(q)$ (see Figure~\ref{fig:greedy}). Recall for a front-running attack, the miner wants to pick the execution ordering $(\Buy(q'), \Buy(q), \Sell(q'))$ where $\Buy(q')$ and $\Sell(q')$ are the miner's own transactions. The crucial observation is that the Greedy Sequencing Rule would never output this as a valid execution ordering. In fact, with the Greedy Sequencing Rule, the miner can only output the following four execution orderings on these transactions:
$$T^{(1)} = (\Buy(q), \Sell(q'), \Buy(q')),\ \ \mbox{or}$$
$$T^{(2)} = (\Buy(q'), \Sell(q'), \Buy(q)),\ \ \mbox{or}$$
$$T^{(3)} = (\Sell(q'), \Buy(q'), \Buy(q)),\ \ \mbox{or}$$
$$T^{(4)} = (\Sell(q'), \Buy(q), \Buy(q')).$$

In $T^{(1)}$, $T^{(2)}$, and $T^{(3)}$, the miner's orders simply cancel each other out, resulting in no profit. The miner prefers $T^{(3)}$ over $T^{(4)}$ since their buy order executes at a better price. This shows that, given this sequencing rule, the miner is indifferent as to whether $\Buy(q)$ is in the block or not.
\begin{figure}
    \centering
    \includegraphics[width=\linewidth]{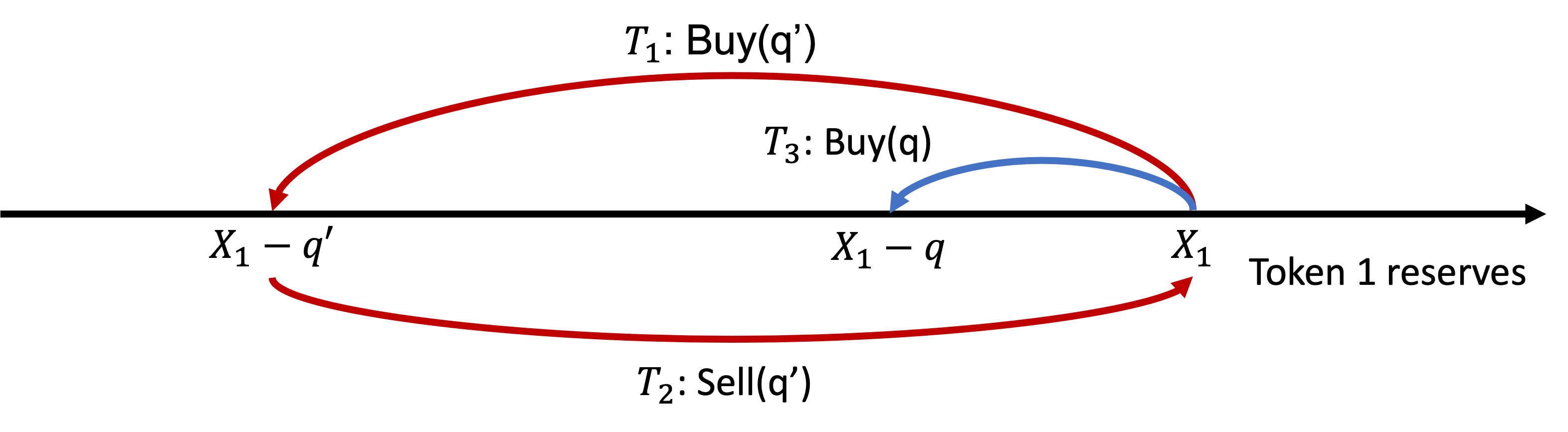}
    \caption{Execution ordering $T^{(1)}$. No  output from the Greedy Sequencing Rule can execute $\Buy(q)$ between $\Buy(q')$ and $\Sell(q')$.
    \label{fig:greedy}}
\end{figure}

\subsection{Execution quality}

In this section, we show that users obtain a predictable execution price that is immune to front-running schemes. Our analysis accounts for an unbounded number of user transactions per block and allows 
the miner to inject as many transactions as they desire.
\begin{definition}[Core/Tail Decomposition]
For an outcome $(X_0, T)$ from any trading game, we partition transactions in $T$ into two sets: the {\em core} denoted $\core(X_0, T)$ and the {\em tail} denoted $\tail(X_0, T)$. A transaction $T_t$ is in the core, if $T_t$ receives a better execution at $X_{t-1}$ than $X_0$. The tail contains the remaining the transactions.
\end{definition}

\begin{lemma}\label{lemma:greedy-tail}
Let $(X_0, T)$ be an outcome from a trading game with the Greedy Sequencing Rule $S$. Then $\tail(X_0, T)$ contains only buy orders or only sell orders.
\end{lemma}
\begin{proof}
The outcome $(X_0, T) \in \lang(X_0, S)$ and there is a corresponding trading game $(X_0, \{T_i\}, S)$ that outputs $(X_0, T)$. During the execution of the trading game, let $T_t$ be an order that is appended to $T$ when neither $B^\buy$ nor $B^\sell$ are empty. We claim $T_t \in \core(X_0, T)$. If $T_t$ is a buy order, it must be that $T_t$ executes at a state $X_{t-1}$ where token 1 reserves are higher than at $X_0$. Consider the following cases:
\begin{itemize}
\item {\em $T_t$ fails to execute at $X_{t-1}$.} From Lemma~\ref{lemma:execution-monotonicity}, if $T_t$ fails to execute at $X_{t-1}$, then it also fails to execute at $X_0$. Thus $T_t$ receives a better execution at $X_{t-1}$  than $X_0$.
\item {\em $T_t$ can successfully execute at $X_{t-1}$.} If $T_t$ fails to execute at $X_0$, then $T_t$ receives a better execution at $X_{t-1}$  than $X_0$. If $T_t$ can successfully execute at $X_0$, the Pricing Lemma (Lemma~\ref{lemma:pricing}) states that $Y(X_{t-1}, T_t) \leq Y(X_0, T_t)$, since token 1 reserves are higher at $X_{t-1}$. Thus $T_t$ receives a better execution at $X_{t-1}$  than at  $X_0$.
\end{itemize}

The above proves if $T_t \in B^\buy$, then $T_t \in \core(X_0, T)$. Next, we consider the case $T_t \in B^\sell$. It must be that $T_t$ executes at a state $X_{t-1}$ where token 1 reserves are lower than at $X_0$. Consider the following cases:
\begin{itemize}
    \item  {\em $T_t$ fails to execute at $X_{t-1}$.} From Lemma~\ref{lemma:execution-monotonicity}, if $T_t$ fails to execute at $X_{t-1}$, then it also fails to execute at $X_0$. Thus $T_t$ receives a better execution at $X_{t-1}$  than $X_0$.
    \item {\em $T_t$ can successfully execute at $X_{t-1}$.} If $T_t$ fails to execute at $X_0$, then $T_t$ receives a better execution at $X_{t-1}$  than $X_0$. If $T_t$ can successfully execute at $X_0$, the Pricing Lemma (Lemma~\ref{lemma:pricing}) states that $Y(X_{t-1}, T_t) \geq Y(X_0, T_t)$, since token 1 reserves are lower at $X_{t-1}$. Thus $T_t$ receives a better execution at $X_{t-1}$  than $X_0$.
\end{itemize}

The above proves that if $T_t \in B^\sell$, then $T_t \in \core(X_0, T)$. Thus all transactions $T_t \in \tail(X_0, T)$ were added to the execution ordering when either $B^\buy$ or $B^\sell$ were empty. This proves that all transactions in $\tail(X_0, T)$ must be of the same type, and only buy orders or only sell orders.
\end{proof}

Let $T_{-t} = (T_1, \ldots, T_{t-1}, T_{t+1}, \ldots, T_{|T|})$ denote the execution ordering $T$ without $T_t$.
\begin{theorem}\label{thm:greedy}
Let $(X_0, T)$ be an outcome from a trading game with the Greedy Sequencing Rule. Then for any user $i \in A$ with a transaction $T_{\sigma(i)} \in T$, we have one of the following
\begin{enumerate}
\item {\bf Indifference.} The execution ordering $T_{-\sigma(i)}$ dominates $T$ for the miner.
    \item {\bf Isolation.}  The execution ordering $T$ dominates $(T_{\sigma(i)})$,
 the  execution ordering that contains only order $T_{\sigma(i)}$, 
    for user $i$.
\end{enumerate}
\end{theorem}
\begin{proof}
Fix a user $i \in A$ with an order $T_{\sigma(i)} \in T$. Consider the following cases:

\begin{itemize}
    \item {\em $T_{\sigma(i)} \in \tail(X_0, T)$.} By Lemma~\ref{lemma:greedy-tail},  all transactions in the tail are of the same type as $T_{\sigma(i)}$. Consider the execution ordering $T_{-\sigma(i)}$, then for any $j > \sigma(i)$, the order $T_j$ receives a better execution under execution ordering $T_{-\sigma(i)}$ than $T$. To see this, observe that if $T_j$ was a buy order, then $T_j$ would execute at a state with higher token 1 reserves (which from Corollary~\ref{cor:execution} only improves the execution of $T_j$). Similar, if $T_j$ was a sell order, then $T_j$ would execute at a state with lower token 1 reserves, which only improves the execution of $T_j$ (Corollary~\ref{cor:execution}). Thus if the miner owns any transaction $T_j$ executing after $T_{\sigma(i)}$, excluding $T_{\sigma(i)}$ only improves the miner's transaction execution quality. This proves that execution ordering $T_{-\sigma(i)}$ dominates $T$ for the miner.
    
    \item {\em $T_{\sigma(i)} \in \core(X_0, T)$.} It follows directly from the definition of the core that $T_{\sigma(i)}$ receives a better execution at state $X_{\sigma(i) - 1}$  than $X_0$. This proves that $T$ dominates $(T_{\sigma(i)})$ for user $i$.
\end{itemize}

The first case argues the miner can only improve their utility by excluding the user's transaction from the block, while the second case argues the user receives an execution as good as the optimal execution they would receive in isolation. This proves the theorem.
\end{proof}

This strong positive result  might come as a surprise given the impossibility result (Theorem~\ref{thm:impossibility}). However, the miner can still profit from injecting their own transactions in the greedy sequencing rule, but provably without causing an execution price worse than what the user expected if sequenced at the start of the block. To see this, consider the same example from Section~\ref{sec:impossibility}, where the block contains three identical buy orders, $\Buy(2)$ and three identical sell orders, $\Sell(2)$,
and the exchange uses the product potential function, $\phi(X) = X_1 \cdot X_2$, 
with initial state $X_1, X_2 \geq 10$. 

Figure~\ref{fig:greedy-example} shows a valid execution ordering from the Greedy Sequencing Rule. By letting the miner own order orders $\{T_1, T_2, T_5\}$, they obtain a positive quantity of token 2 for free. Let us check the execution quality for users. Orders $T_4$ and $T_3$ execute at a state as good as $X$. Order $T_6$ executes at a worse state than $X$; however, the miner is indifferent about $T_6$ (since  no miner transaction is sequenced after $T_6$). Interestingly, if $(T_3, T_4, T_6)$ were the execution ordering, $T_6$ would receive the same execution, although $T_3$ would receive an even better execution because $T_4$ causes a positive externality on $T_3$.
\begin{figure}
    \centering
    \includegraphics[width=\linewidth]{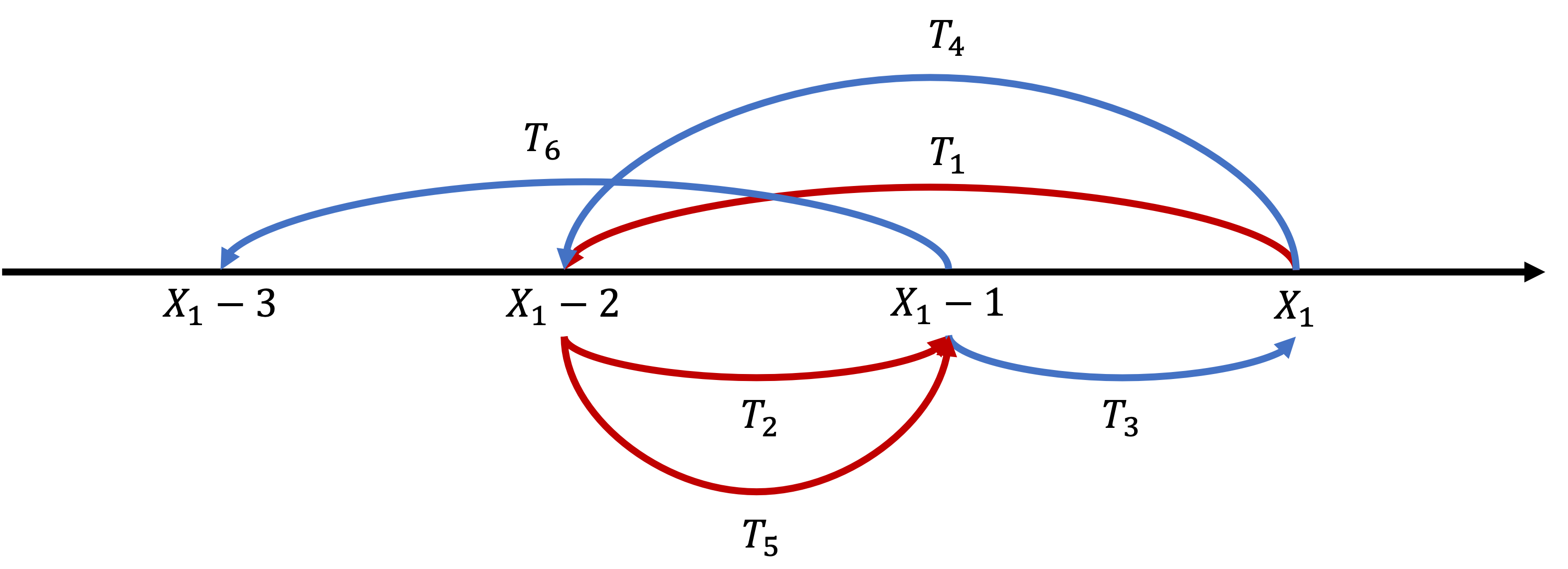}
    \caption{A valid outcome when trading with the Greedy Sequencing Rule for the block that contains three identical buy orders, $\Buy(2)$, and three identical sell orders, $\Sell(2)$. The miner owns the orders in red. The initial state is $(X_1, X_2)$. Left arrows denote buy orders and right arrows denote sell orders.
    \label{fig:greedy-example}}
\end{figure}

\subsection{The Greedy Sequencing Rule is verifiable}

We conclude by providing a proof that the Greedy Sequencing Rule, $S$, is verifiable. Recall that $\lang(X_0, S)$~\eqref{eq:language} is a language representing the feasible outcomes for the sequencing rule $S$ when the initial state is $X_0$. 
We define Algorithm~\ref{alg:verifier} as our verifier. Let us first check that the verifier  outputs {\em True} if $(X_0, T) \in \lang(X_0, S)$ is a valid outcome from the Greedy Sequencing Rule. By definition, there is an input $(X_0, \{A_i\})$ to the Greedy Sequencing Rule that outputs $(X_0, T)$.
Suppose we run the Greedy Sequencing Rule on $(X_0, \{A_i\})$, and suppose for contradiction
the verifier outputs $\false$ at step $t$. The following must be true:
\begin{itemize}
    \item $\{T_t, T_{t+1}, \ldots, T_{|T|}\}$ are not all orders of the same type, and
    
    \item $X_{t-1, 1} \geq X_{0, 1}$ and $T_t$ is a sell order, or $X_{t-1, 1} < X_{0, 1}$ and $T_t$ is a buy order.
\end{itemize}

The first bullet implies $B^\buy$ and $B^\sell$ were not empty at step $t$. Thus whenever $X_{t-1, 1} \geq X_{0, 1}$, the Greedy Sequencing Rule would pick $T_t$ to be a buy order and whenever $X_{t-1, 1} < X_{0, 1}$ the Greedy Sequencing Rule would pick $T_t$ to be a sell order (and since $B^\buy$ and $B^\sell$ are non-empty, this is always possible). This contradicts the second bullet, and proves the verifier outputs $\true$.

Next, consider the case where $(X_0, T) \not \in \lang(X_0, S)$, and this is not a valid outcome from the Greedy Sequencing Rule. Suppose for contradiction, the verifier outputs $\true$. By inspection, we can construct a trading game instance $(X_0, \{A_i\}, S)$ that outputs $(X_0, T)$. Thus $T \in S(X_0, \{A_i\})$ implies $(X_0, T) \in \lang(X_0, S)$, which is a contradiction.

\begin{flushleft}
\begin{framed}
\begin{center} \bf Verifier for the Greedy Sequencing Rule \end{center}
\ \\
{\bf Input:} Outcome $(X_0, T)$.

{\bf Output:} $\true\, |\, \false$.

Proceed as follows:
\begin{enumerate}
    \item For $t = 1, 2, \ldots, |T|$:
    \begin{enumerate}
        \item If $T_t, T_{t+1}, \ldots, T_{|T|}$, are orders of the same type (i.e., all are buy orders or all are sell orders), then output $\true$.
        \item If $X_{t-1, 1} \geq X_{0, 1}$ and $T_t$ is a buy order, then output $\false$.
        \item If $X_{t-1, 1} < X_{0, 1}$ and $T_t$ is a sell order, then output $\false$.
        \item Let $X_t$ be the state after $T_t$ executes on $X_{t-1}$.
    \end{enumerate}
    
    \item Output $\true$.
\end{enumerate}
\end{framed}
\captionof{algorithm}{The Verifier for the Greedy Sequencing Rule.}
\label{alg:verifier}
\end{flushleft}

\section{Conclusion}\label{sec:conclusion}

We have proposed a decentralized exchange  framework where we model miners  as intermediaries between a liquidity pool exchange and users, acting to  choose which transactions to include in a block. The miners are free to pick the block content, but the execution ordering must be a feasible output from the Greedy Sequencing Rule that we introduce and that they commit to implement. Our design does not require users to trust miners because we require that the execution ordering is verifiable, i.e., a polynomial time algorithm can certify if the execution ordering is a valid output from the Greedy Sequencing Rule. Our proposal is backward compatible with the current  implementations of liquidity pool exchanges: miners incur no computational overhead for implementing the sequencing rule and the sequencing rule verifier runs in polynomial time. From a practical side, our proposal can be operationalized through a relay service that commits to following our sequencing rule and operates a private transaction pool.

A common desideratum in DeFi protocols is to limit the miner utility from protocol manipulations (i.e., removing so-called miner extractable value). In this regard, we prove that no sequencing rule can prevent miners from obtaining risk-free profits in liquidity pool exchanges that have price impact: there are always instances where the miner receives strictly positive utility. Given this impossibility result, we focused on sequencing rules that provide provable guarantees for users in a system populated by self-interested miners. Our main finding is the Greedy Sequencing Rule,  which ensures that for any user transaction $A$ that is executed, either (1) transaction $A$ receives an execution price as good as if $A$ was the only transaction in the block, or (2) transaction $A$ receives an execution price worse than this standalone price, but the miner does not gain when including $A$ in the block. Thus a bad execution price is due to competition for the same token between users, and not the result of manipulation from the miner.

Our work leaves open the following questions:
\begin{itemize}
\item Our Greedy Sequencing Rule explicitly assumes the liquidity pool exchange only pools $n = 2$ tokens. Although two token pools have the highest trading volume, other liquidity pool exchanges allow for liquidity pools with three or more tokens. Namely, Balancer~\cite{martinelli2019non} uses the product potential $\phi(X) = \prod_{i = 1}^n X_i$ where one might have $n \geq 3$ tokens in the same pool. {\em Which guarantees can verifiable sequencing rules provide for $n \geq 3$ token liquidity pool exchanges?}

\item We propose a verifiable sequencing rule with provable execution price guarantees for users under a simple utility model for miners, i.e., we assume that a miner prefers token basket $X$ over $X'$ if $X \geq X'$ (and incomparable otherwise). {\em If the miner's utility is a real-valued function $u(X) \in \mathbb R$, can one characterize the sequencing rule that minimizes the miner's utility over all trading games and safe deviations?}

\item We show that for any sequencing rule, there is a set of user transactions that allow miners to extract risk-free profits. This result shows that welfare loss for users (i.e., the price of anarchy) is inevitable when miners are strategic. {\em Define an optimal sequencing rule, i.e., a rule that maximizes user welfare under optimal miner deviations. Can we characterize the class of optimal sequencing rules?}

\item Our impossibility result only applies to deterministic sequencing rules. A randomized sequencing rule $S$ takes not only the initial state $X$ and the block content $B$, but also a random string $r$ (potentially unknown by the miner before picking $B$) and outputs a set system $S(X, B, r)$ with valid execution orderings. {\em Are there randomized sequencing rules where miners cannot obtain risk-free profits?}

\item In a setting where a user creates two or more transactions $\{B_i, B_j\}$, they might wish to specify a constraint that $B_i$ executes before $B_j$. {\em Can one replicate our results with sequencing rules that must preserve the user's ordering constraints?}
\end{itemize}

\appendix

\section{Mathematical Background}\label{app:math}

\begin{lemma}[AM-GM Inequality]\label{lemma:am-gm}
Let $x_1, x_2, \ldots, x_n \geq 0$. Then $\frac{1}{n} \sum_{i = 1}^n x_i \geq \sqrt[n]{\prod_{i = 1}^n x_i}$.
\end{lemma}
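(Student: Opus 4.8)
The plan is to reduce everything to the elementary scalar inequality $1 + t \le e^{t}$, equivalently $t \le e^{t-1}$, valid for all real $t$. This is standard: the function $f(t) = e^{t} - 1 - t$ satisfies $f(0) = 0$ and $f'(t) = e^{t} - 1$, which is negative for $t < 0$ and positive for $t > 0$, so $f \ge 0$ everywhere (or one simply invokes convexity of $\exp$ and the tangent line at $0$).

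First I would dispose of the degenerate cases. Write $A = \frac{1}{n}\sum_{i=1}^n x_i$ for the arithmetic mean. If $A = 0$ then every $x_i = 0$ and both sides vanish; if some $x_i = 0$ while $A > 0$, then the geometric mean is $0 \le A$ and we are done. So we may assume $x_i > 0$ for all $i$, and in particular $A > 0$.

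Next I apply $t \le e^{t-1}$ with $t = x_i / A$ for each $i$ and multiply the $n$ resulting inequalities:
\[
\prod_{i=1}^n \frac{x_i}{A} \;\le\; \prod_{i=1}^n e^{\,x_i/A - 1} \;=\; \exp\!\Big(\tfrac{1}{A}\sum_{i=1}^n x_i - n\Big) \;=\; e^{0} \;=\; 1 .
\]
Hence $\prod_{i=1}^n x_i \le A^n$, and taking $n$-th roots (monotone on $[0,\infty)$) yields $\sqrt[n]{\prod_{i=1}^n x_i} \le A$, which is the claim.

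The only real content is the scalar bound $t \le e^{t-1}$; everything after that is bookkeeping, so I anticipate no genuine obstacle. As an alternative that avoids the exponential, one could run Cauchy's forward–backward induction: the case $n = 2$ reduces to $(\sqrt{x_1} - \sqrt{x_2})^2 \ge 0$, the step $n \to 2n$ follows by splitting the variables into two blocks of size $n$ and applying the $n$-case twice and then the $2$-case, and the step $n \to n-1$ follows by appending $x_n := \frac{1}{n-1}\sum_{i<n} x_i$ and simplifying. The mild annoyance in that route is the descent bookkeeping, which the exponential argument sidesteps entirely.
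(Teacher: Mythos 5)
Your proof is correct. The paper itself states Lemma~\ref{lemma:am-gm} without proof, treating it as a standard fact, so there is no in-paper argument to compare against; your write-up actually supplies what the paper omits. The route you chose --- P\'olya's argument via the tangent-line bound $1+t\le e^{t}$, applied with $t = x_i/A - 1$ and multiplied over $i$ so that the exponent telescopes to $0$ --- is complete: the degenerate cases ($A=0$, or some $x_i=0$) are handled correctly, the scalar inequality is justified by the sign analysis of $f(t)=e^t-1-t$ (or by convexity of $\exp$), and the final passage from $\prod_i x_i \le A^n$ to the stated inequality uses only monotonicity of the $n$-th root on $[0,\infty)$. The alternative you sketch, Cauchy's forward--backward induction, is equally valid and avoids the exponential function, at the cost of the descent bookkeeping you mention; the exponential proof is shorter and meshes naturally with the convexity toolkit the rest of the paper develops (e.g.\ Lemma~\ref{lemma:epi}), since the key bound is exactly the supporting-line inequality for the convex function $e^t$ at $t=0$. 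One small remark on the statement itself: the hypothesis ``$x, y, \ldots, x_n \ge 0$'' is evidently a typo for ``$x_1, \ldots, x_n \ge 0$'', which is how you correctly read it.
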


\begin{theorem}[Intermediate Value Theorem]\label{thm:intermediate-value}
Let $f$ be a real-valued continuous function with domain $\dom(f)$ equals to the interval $[a, b]$. If $\min\{f(a), f(b)\} \leq u \leq \max\{f(a), f(b)\}$, then there is a $c \in [a, b]$ such that $f(c) = u$.
\end{theorem}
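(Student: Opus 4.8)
The plan is to reduce to a single configuration and then produce $c$ as a supremum, with the least upper bound property of $\mathbb{R}$ doing all the real work. First I would dispose of the trivial cases: if $u = f(a)$ or $u = f(b)$, take $c = a$ or $c = b$; so assume the strict inequalities $\min\{f(a),f(b)\} < u < \max\{f(a),f(b)\}$. By replacing $f$ with $-f$ and $u$ with $-u$ if necessary (this preserves continuity, the domain, and the hypothesis), I may assume the orientation $f(a) < u < f(b)$.

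Next I would set $S = \{x \in [a,b] : f(x) \le u\}$. This set is nonempty, since $a \in S$, and bounded above by $b$, so completeness of $\mathbb{R}$ furnishes $c := \sup S$, and $a \le c \le b$. The entire goal is then to prove $f(c) = u$, which I would accomplish by ruling out both strict inequalities using continuity at $c$.

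Suppose $f(c) < u$. Continuity gives $\delta > 0$ with $f(x) < u$ for every $x \in (c-\delta, c+\delta)\cap[a,b]$; since $f(b) > u$ we have $c \ne b$, hence $c < b$, so some $x_0$ with $c < x_0 < c+\delta$ lies in $[a,b]$, whence $x_0 \in S$ with $x_0 > c$, contradicting that $c$ is an upper bound. Suppose instead $f(c) > u$. Continuity gives $\delta > 0$ with $f(x) > u$ on $(c-\delta, c+\delta)\cap[a,b]$; since $f(a) < u$ we have $c > a$. Any $x \in S$ satisfies $x \le c$ and $x \notin (c-\delta, c+\delta)$, hence $x \le c-\delta$, so $c-\delta$ is an upper bound for $S$ strictly smaller than $c = \sup S$, again a contradiction. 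Therefore $f(c) = u$.

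The genuine content — the step I expect to be the real obstacle — is the appeal to completeness: without the least upper bound property the supremum need not exist, and indeed the statement fails over $\mathbb{Q}$ (e.g. $f(x) = x^2 - 2$ on $[0,2]$). Everything else is routine $\varepsilon$–$\delta$ bookkeeping, the only mild subtlety being to intersect the neighborhoods of $c$ with $[a,b]$ so the argument survives when $c$ is an endpoint. An alternative I would keep in reserve is the bisection construction: build nested intervals $[a_n,b_n]$ of length $(b-a)/2^n$ with $f(a_n) \le u \le f(b_n)$, invoke the nested interval property for the unique common point $c$, and pass to the limit in $f(a_n) \le u \le f(b_n)$ using continuity; this merely trades the supremum argument for an equivalent consequence of completeness.
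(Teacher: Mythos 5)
Your proof is correct. Note that the paper states the Intermediate Value Theorem as a background fact and gives no proof of it at all, so there is no in-paper argument to compare against; your least-upper-bound construction (take $c = \sup\{x \in [a,b] : f(x) \le u\}$ and rule out $f(c) < u$ and $f(c) > u$ by continuity) is the standard completeness proof, and every step checks out, including the reduction to the orientation $f(a) < u < f(b)$ and the care taken at the endpoints by intersecting neighborhoods with $[a,b]$. One tiny remark: in the case $f(c) > u$, the observation that $c > a$ is not actually needed for the contradiction --- any $x \in S$ satisfies $x \le c$ and lies outside $(c-\delta, c+\delta)$, so $c - \delta$ is already an upper bound below $\sup S$ --- but its inclusion does no harm.
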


\begin{lemma}\label{lemma:quasiconcave}
A real-valued function $f$ is quasiconcave if and only if all its superlevel sets are convex sets.
\end{lemma}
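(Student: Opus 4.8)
The plan is to prove both implications directly from the definitions, since the statement is little more than a set-theoretic reformulation of quasiconcavity. Recall that a real-valued $f$ with convex domain $D$ is quasiconcave when $f(\lambda x + (1-\lambda)y) \geq \min\{f(x), f(y)\}$ for all $x, y \in D$ and all $\lambda \in [0,1]$, and that the superlevel set of $f$ at height $\alpha$ is $S_\alpha = \{x \in D : f(x) \geq \alpha\}$.

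For the forward direction, I would fix an arbitrary real number $\alpha$, pick two points $x, y \in S_\alpha$ and a coefficient $\lambda \in [0,1]$, and chase inequalities: from $f(x) \geq \alpha$ and $f(y) \geq \alpha$ we get $\min\{f(x), f(y)\} \geq \alpha$, and quasiconcavity then yields $f(\lambda x + (1-\lambda)y) \geq \min\{f(x), f(y)\} \geq \alpha$, so $\lambda x + (1-\lambda)y \in S_\alpha$. Hence every superlevel set is convex.

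For the converse, I would take arbitrary $x, y \in D$ and $\lambda \in [0,1]$, set $\alpha := \min\{f(x), f(y)\}$, observe that both $x$ and $y$ lie in $S_\alpha$ by construction, and invoke convexity of $S_\alpha$ to conclude $\lambda x + (1-\lambda)y \in S_\alpha$; unwinding this membership is precisely the quasiconcavity inequality at $x, y, \lambda$.

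The only points needing care are bookkeeping ones: one should assume (or note that it is implicit in the definition) that $D$ is convex, so that the convex combinations fed into $f$ are legal arguments, and one should dispatch the degenerate case $S_\alpha = \emptyset$, which is vacuously convex. Neither Lemma~\ref{lemma:am-gm} nor Theorem~\ref{thm:intermediate-value} is needed here. I anticipate no substantive obstacle — each direction is a two-line unwinding of the definitions.
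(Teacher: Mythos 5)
Your proposal is correct and follows essentially the same argument as the paper: the forward direction chases the quasiconcavity inequality to show superlevel sets are closed under convex combinations, and the converse picks the superlevel set at height $\min\{f(x), f(y)\}$ and unwinds membership of the convex combination. Your extra remarks about convexity of the domain and the empty superlevel set are sensible bookkeeping that the paper leaves implicit.
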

\begin{proof}
First, consider the case where $f$ is quasiconcave. Let $S_c$ be a superlevel set of $f$. Then for any $x, y \in S_c(f)$ and $\alpha \in [0, 1]$, quasiconcavity implies $f(\alpha x + (1-\alpha)y) \geq \min\{f(x), f(y)\} \geq c$ where the last inequality follows from the fact $x$ and $y$ are in the superlevel set $S_c(f)$. The inequality witnesses that the convex combination of $x$ with $y$ also belongs to the superlevel set $S_c(f)$. This proves $S_c(f)$ is a convex set.

Next, consider the case where all superlevel sets $S_c(f)$ of $f$ are convex sets. Then for any $x, y \in \dom(f)$ and $\alpha \in [0, 1]$, we can pick $c = \min\{f(x), f(y)\}$. Then convexity implies the convex combination $\alpha x + (1-\alpha) y \in S_c(f) \subseteq \dom(f)$. Thus $f(\alpha x + (1-\alpha) y) \geq c = \min\{f(x), f(y)\}$. This proves that $f$ is quasiconcave.
\end{proof}

\begin{definition}[Graph and Epigraph]
The {\em graph} of a real-valued function $f$ is defined as
$$\graph(f) = \{(x, f(x)) : x \in \dom(f)\}.$$

A function $f$ {\em generates} a set $S$, if $S = \graph(f)$. The {\em epigraph} of a real-valued function $f$ is defined as
$$\epi(f) = \{(x, y) : x \in \dom(f): y \geq f(x)\}.$$
\end{definition}

\begin{lemma}\label{lemma:epi}
A real-valued function $f$ is convex if and only if $\epi(f)$ is a convex set.
\end{lemma}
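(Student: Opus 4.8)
The plan is to mirror the structure of the proof of Lemma~\ref{lemma:quasiconcave}: prove each implication by unpacking the defining inequality on one side into a membership statement about the relevant set on the other. Throughout I will use that the definition of a convex function bundles in the requirement that $\dom(f)$ is itself a convex set, and that a point of $\epi(f)$ is a pair $(x,y)$ with $x \in \dom(f)$ and $y \ge f(x)$, with convex combinations taken coordinatewise in the product space.

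For the forward direction, assume $f$ is convex and take $(x_1,y_1),(x_2,y_2)\in\epi(f)$ and $\alpha\in[0,1]$. The convex combination is $(\alpha x_1+(1-\alpha)x_2,\ \alpha y_1+(1-\alpha)y_2)$. Convexity of $\dom(f)$ puts the first coordinate back in $\dom(f)$, and then the chain $f(\alpha x_1+(1-\alpha)x_2)\le \alpha f(x_1)+(1-\alpha)f(x_2)\le \alpha y_1+(1-\alpha)y_2$ — the first inequality by convexity of $f$, the second because $y_i\ge f(x_i)$ — shows the combination lies in $\epi(f)$. Hence $\epi(f)$ is convex.

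For the converse, assume $\epi(f)$ is convex. Given $x_1,x_2\in\dom(f)$ and $\alpha\in[0,1]$, observe that $(x_1,f(x_1)),(x_2,f(x_2))\in\epi(f)$, so their convex combination $(\alpha x_1+(1-\alpha)x_2,\ \alpha f(x_1)+(1-\alpha)f(x_2))$ also lies in $\epi(f)$. Reading off the definition of $\epi(f)$ yields both that $\alpha x_1+(1-\alpha)x_2\in\dom(f)$ (so the domain is convex, as required for $f$ to be a convex function) and that $f(\alpha x_1+(1-\alpha)x_2)\le \alpha f(x_1)+(1-\alpha)f(x_2)$, which is exactly the convexity inequality.

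I do not expect a genuine obstacle; the only point requiring care is the domain bookkeeping — the statement should be read so that ``convex function'' presupposes a convex domain, since otherwise the converse would need that hypothesis added separately. If the paper's definition of convexity does not already include it, I would remark that $\epi(f)$ being convex forces $\dom(f)$ to be convex (project onto the first coordinate) and fold that observation into the argument.
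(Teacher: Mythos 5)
Your proof is correct and takes essentially the same approach as the paper's: unpack epigraph membership, apply convexity of $f$ (respectively of $\epi(f)$), and read off the conclusion, including the domain bookkeeping in the converse direction. The only cosmetic difference is that you use two-point convex combinations with the inequalities written coordinate-by-coordinate, whereas the paper works with general finite combinations and compares pairs directly.
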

\begin{proof}
Consider the case where $f$ is convex. Let $(x_1, y_1), \ldots, (x_m, y_m) \in \epi(f)$, $\sum_{i = 1}^m \alpha_i = 1$. Then
\begin{align*}
    \sum_{i = 1}^m \alpha_i \cdot (x_i, y_i) &\geq \sum_{i = 1}^m \alpha_i \cdot (x_i, f(x_i))\\
    &= \left(\sum_{i = 1}^m \alpha_i x_i, \sum_{i = 1}^m \alpha_i f(x_i)\right)\\
    &\geq \left(\sum_{i = 1}^m \alpha_i x_i, f(\sum_{i = 1}^m \alpha_i x_i)\right) & \qquad \text{\{By convexity of $f$\}}
\end{align*}
The chain of inequalities, proves $\left(\sum_{i = 1}^m \alpha_i x_i, \sum_{i = 1}^m \alpha_i y_i\right) \in \epi(f)$. Thus $\epi(f)$ is convex.

Next, consider the case where $\epi(f)$ is a convex set. Let $x, \ldots, x_m \in \dom(f)$, $\sum_{i = 1}^m \alpha_i = 1$. By definition, $(x_i, f(x_i)) \in \epi(f)$. Then
\begin{align*}
    \left(\sum_{i = 1}^m \alpha_i \cdot x_i, \sum_{i = 1}^m \alpha_i \cdot f(x_i)\right) &= \sum_{i = 1}^m \alpha_i \cdot (x_i, f(x_i)) \in \epi(f)\\
    &\qquad \text{\{By convexity of $\epi(f)$\}}
\end{align*}

This proves $\sum_{i = 1}^m \alpha_i \cdot x_i \in \dom(f)$ and $\sum_{i = 1}^m \alpha_i \cdot f(x_i) \geq f(\sum_{i = 1}^m \alpha_i \cdot x_i)$. Thus $f$ is a convex function.
\end{proof}

\begin{restatement}[Lemma~\ref{lemma:convex-slope}]
If $f : \mathbb R \to \mathbb R$ is a convex function, then the slope function $R$ (Definition~\ref{def:slope}) of $f$ is increasing on all dimensions. That is, for all $x, y, z \in \dom(f)$,

\begin{itemize}
    \item if $x \leq z$, then $R(x, y) \leq R(z, y)$, and 
    \item if $y \leq z$, then $R(x, y) \leq R(x, z)$.
\end{itemize}
\end{restatement}
\begin{proof}
 $R$ is a symmetric function because for all $x, y \in \dom(f)$, and we obtain that $R(x, y) = R(y, x)$. Thus it suffices to show that for fixed $x \in \dom(f)$, the function $h(y) = R(x, y)$ defined over $\dom(f)$ is an increasing function. Pick any point $z \in \dom(f)$ such that $z > y$. Then it suffices to show that $h(y) \leq h(z)$. We consider separately the case where (1) $x \leq y \leq z$, (2) $y \leq x \leq z$ and (3) $y \leq z \leq x$. For each case, we observe that the midpoint can be defined as a convex combination of the extreme points. That is, the first case implies there is an $\alpha \in [0, 1]$ such that $y = \alpha z + (1-\alpha)y$. Then convexity of $f$ implies
$$f(y) \leq \alpha f(z) + (1-\alpha) f(x)$$.

Rearranging the inequality and observing that $\alpha = \frac{y-x}{z-x}$, one obtains that
\begin{align*}
    h(y) = \frac{f(y) - f(x)}{y - x} \leq \frac{f(z) - f(x)}{z - x} = h(z).
\end{align*}

The analysis of the second and third cases is similar. This proves the slope function is an increasing function as desired.
\end{proof}

\section{Proof of Pricing Lemma}\label{app:pricing}

\begin{restatement}[Lemma~\ref{lemma:pricing}]
Consider states $X$ and $X'$ where $\phi(X) = \phi(X')$ and $X_1' < X_1$ and assume the potential function $\phi$ is quasiconcave and strictly increasing. Then the following hold:
\begin{itemize}
\item If $\Buy(q)$ can successfully execute at both $X$ and $X'$, then $Y(X, \Buy(q)) \leq Y(X', \Buy(q))$.

\item If $\Sell(q)$ can successfully execute at both $X$ and $X'$, then $Y(X, \Sell(q)) \leq Y(X', \Sell(q))$.
\end{itemize}
\end{restatement}

Consider Figure~\ref{fig:pricing} where the curve represents the set of reachable states. We will show the assumption the potential function is strictly increasing and quasiconcave implies the curve is the graph of a convex function $f$. Then the payment inequalities will follow from observing the slope of $f$ is decreasing (since $f$ is a convex function).

\begin{figure}
    \centering
    \includegraphics[width=\linewidth]{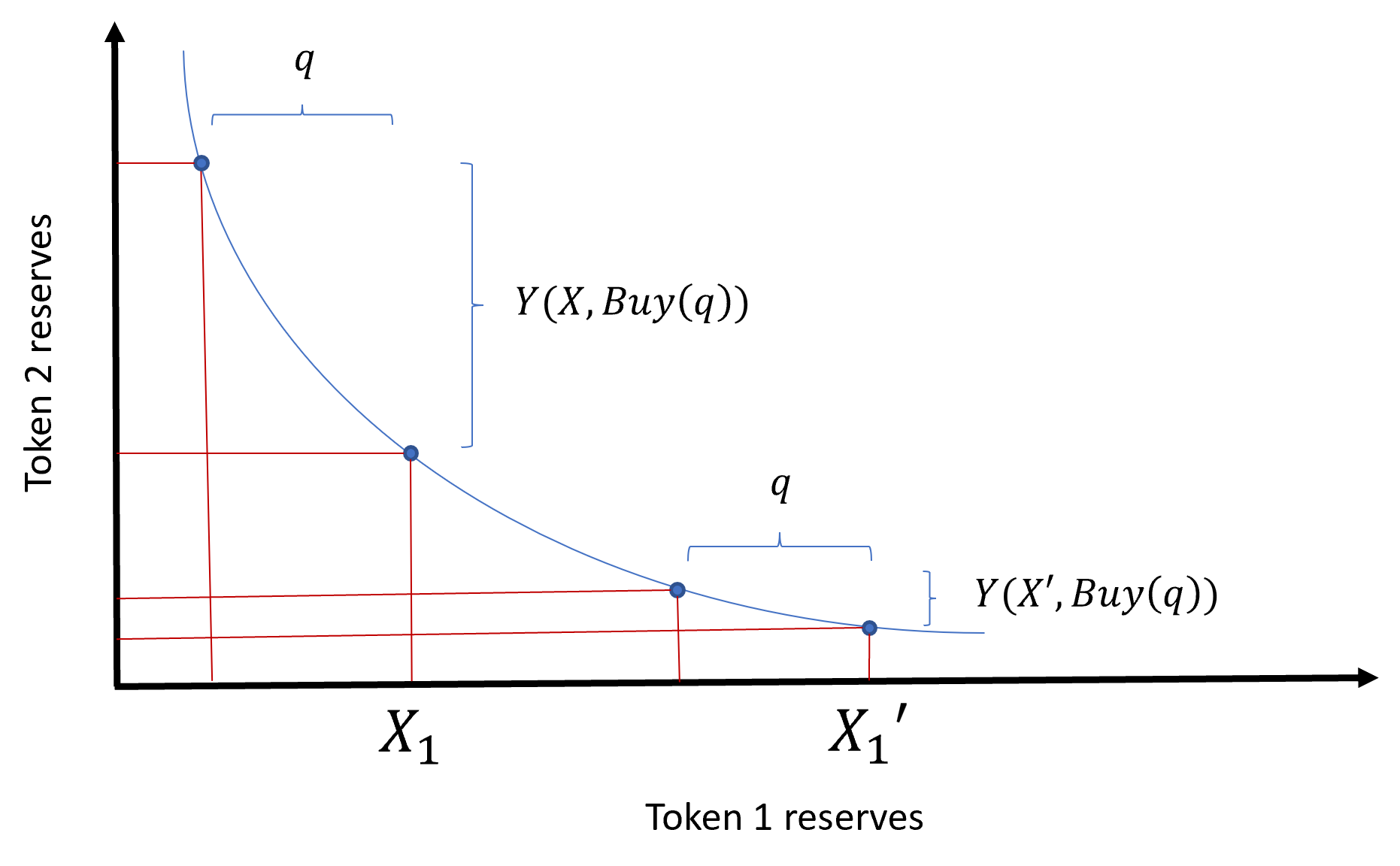}
    \caption{Pricing inequalities for quasiconcave and strictly increasing potential functions.}
    \label{fig:pricing}
\end{figure}

One might wonder if there are two distinct reachable states $(X_1, X_2), (X_1, X_2') \in L_c(\phi)$ or $(X_1, X_2), (X_1', X_2) \in L_c(\phi)$. Interestingly, the fact $\phi$ is strictly increasing precludes this because there is a bijective real-valued function $f$ that generates the set of reachable states $L_c(\phi)$.
\begin{lemma}\label{lemma:generator}
Let $\phi$ be a strictly increasing  potential function. Then there is a bijective real-valued function $f$ that generates level set $L_c(\phi)$ in the following sense: for all $(X_1, X_2) \in L_c(\phi)$, $X_2 = f(X_1)$ and $X_1 = f^{-1}(X_2)$.
\end{lemma}
\begin{proof}
Fix $(X_1, X_2) \in L_c(\phi)$ and $(X_1, X_2') \in \dom(\phi)$. Without loss of generality, let $X_2 > X_2'$. Then the fact $\phi$ is strictly increasing implies $\phi(X_1, X_2) > \phi(X_1, X_2')$. Thus $(X_1, X_2)$ and $(X_1, X_2')$ are not in the same level set. This proves for each $X_1$, there is a unique $X_2$ such that $\phi(X_1, X_2) = c$. Define $f(X_1) = X_2$ for all such $X_1$.

With a similar argument, we can show that for each $X_2$, there is a unique $X_1$ such that $\phi(X_1, X_2) = c$. Define $g(X_2) = X_1$ for all such $X_2$. Finally, we claim $g = f^{-1}$. To see this, observe $g(f(X_1)) = g(X_2) = X_1$. Thus $g$ is the inverse function of $f$, as desired.
\end{proof}

Lemma~\ref{lemma:generator} ensures the existence of a generator $f$ for the set of reachable states $L_c(\phi)$, as long as the potential function $\phi$ is strictly increasing. An interpretation for this result is that the reserves for token $i$ uniquely determine the reserves for token $3$-$i$ (for $i\in \{1,2\}$). That is, if $X_1$ is known, then $X_2 = f(X_1)$. Equivalently, if $X_2$ is known, then $X_1 = f^{-1}(X_2)$. See Figure~\ref{fig:increasing}.

We can now reinterpret Lemma~\ref{lemma:pricing} as stating that the price for token $i$ increases as $X_i$ decreases (since $X_{3-i}$ is completely determined from $X_i$). To prove Lemma~\ref{lemma:pricing}, we will argue that $f$ is a convex function, which  follows from the quasiconcavity of $\phi$. If $\phi$ is quasiconcave, a well-known fact is that the superlevel set $S_c(\phi)$ is a convex set. Interestingly, we can relate $S_c(\phi)$ with $f$ by observing $S_c(\phi) = \epi(f)$, which follows from the fact $\phi$ is strictly increasing. All that is left is to use a well-known property for convex functions:  a real-valued function $f$ is convex if and only if $\epi(f)$ is a convex set (Figure~\ref{fig:convex}).

\begin{figure}
    \begin{minipage}{0.5\linewidth}
    \includegraphics[width=\linewidth]{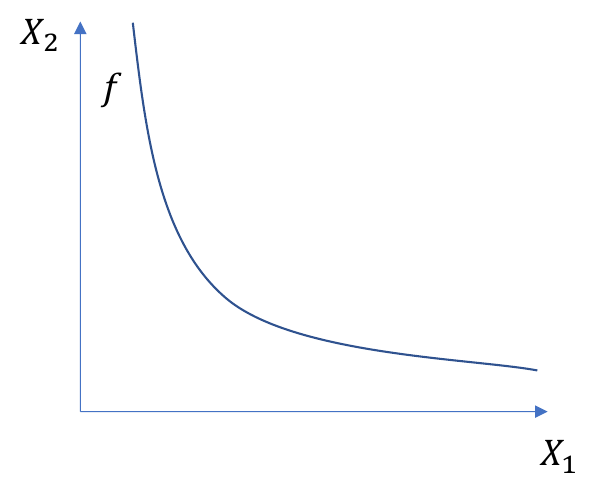}
    \caption{Because $\phi$ is strictly increasing the generator $f$ of level set $L_c(\phi)$ always exists.\\}
    \label{fig:increasing}
    \end{minipage}
    \begin{minipage}{0.05\linewidth}
    \end{minipage}
    \begin{minipage}{0.5\linewidth}
    \includegraphics[width=\linewidth]{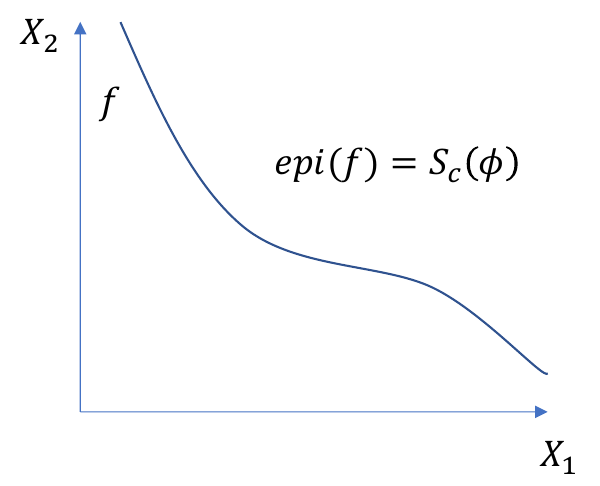}
    \caption{If $\phi$ is strictly increasing and quasiconcave, the generator $f$ of level set $L_c(\phi)$ is a convex function.}
    \label{fig:convex}
    \end{minipage}
\end{figure}

\begin{definition}[Convex function]\label{def:convex-function}
A real-valued function $f$ is convex if $\dom(f)$ is a convex set, and for all $x, y \in \dom(f)$, and all $\alpha \in [0, 1]$, we have that $f(\alpha \cdot x + (1-\alpha) \cdot y) \leq \alpha \cdot f(x) + (1-\alpha) \cdot f(y)$.
\end{definition}

\begin{lemma}\label{lemma:generator-convex}
Let $\phi$ be strictly increasing and quasiconcave. Then there is a bijective convex function $f$ that generates level set $L_c(\phi)$.
\end{lemma}
\begin{proof}
Applying Lemma~\ref{lemma:generator} with strictly increasing function $\phi$ implies the existence of a bijective function $f$ that generates $L_c(\phi)$. We claim $\epi(f) = S_c(\phi)$, for superlevel set $S_c(\phi)$. First, consider the case where $(X_1, X_2) \in S_c(\phi)$. By definition, $\phi(X_1, f(X_1)) = c$ and $\phi(X_1, X_2) \geq c$. Because $\phi$ is strictly increasing, we conclude $X_2 \geq f(X_1)$. This proves $(X_1, X_2) \in \epi(f)$. Next, consider the case where $(X_1, X_2) \in \epi(f)$. By definition, $X_2 \geq f(X_1)$. Because $\phi$ is strictly increasing, $\phi(X_1, X_2) \geq \phi(X_1, f(X_1)) = c$. This proves $(X_1, X_2) \in S_c(\phi)$. Both cases prove $\epi(f) = S_c(\phi)$.

Next, we require two well-known facts from convexity theory (see  Appendix~\ref{app:math}):
\begin{itemize}
    \item 
    A real-valued function is quasiconcave if and only if all its superlevel sets are convex sets.
    
    \item 
    A real-valued function is convex if and only if its epigraph is a convex set.
\end{itemize}

 The first bullet implies $S_c(\phi) = \epi(f)$ is a convex set since $\phi$ is quasiconcave. The second bullet implies $f$ is a convex function, as desired.
\end{proof}

We are ready to prove Lemma~\ref{lemma:pricing}. All that is left is the following well-known fact from convexity theory:
\begin{definition}[Slope function]\label{def:slope}
The {\em slope function} of $f : \mathbb R \to \mathbb R$ maps $x, y \in \dom(f)$ to $R(x, y) = \frac{f(y) - f(x)}{y - x}$, the slope of the line connecting $(x, f(x))$ to $(y, f(y))$.
\end{definition}

\begin{lemma}\label{lemma:convex-slope}
If $f$ is a convex function, then the slope function $R$ of $f$ is increasing on all dimensions. That is, for all $x, y, z \in \dom(f)$,

\begin{itemize}
    \item if $x \leq z$, then $R(x, y) \leq R(z, y)$, and 
    \item if $y \leq z$, then $R(x, y) \leq R(x, z)$.
\end{itemize}
\end{lemma}

We include a proof in Appendix~\ref{app:math}. We conclude with the proof of Lemma~\ref{lemma:pricing}.

\begin{proof}[Proof of Lemma~\ref{lemma:pricing}]
Let $f$ be the generator of level set $L_c(\phi)$ and recall $f$ is a convex function as per Lemma~\ref{lemma:generator-convex}. Using the fact that $X_1' \leq X_1$, we obtain
\begin{align*}
    R(X_1'-q, X_1') &\leq R(X_1-q, X_1') \qquad \text{\{by Lemma~\ref{lemma:convex-slope}\}}\\
    &\leq R(X_1-q, X_1) \qquad \text{\{by Lemma~\ref{lemma:convex-slope}\}.}\\ 
\end{align*}

Expanding on the definition of $R$, we derive
$$f(X_1'-q) - f(X_1') \geq f(X_1-q) - f(X_1).$$
\begin{claim}
Let $A$ be an order that can successfully execute at state $X = (X_1, f(X_1))$. The following are true:
\begin{itemize}
\item if $A = \Buy(q)$, then $Y(X, A) = f(X_1 - q) - f(X_1)$, and
\item if $A = \Sell(q)$, then $Y(X, A) = f(X_1 + q) - f(X_1)$.
\end{itemize}
\end{claim}
\begin{proof}
Let $Z = (Z_1, f(Z_1))$ be the state after $A$ executes at $X$. By definition, $Y(X, A) = f(Z_1) - f(X_1)$. If $A = \Buy(q)$, then $Z_1 = X_1 - q$. If $A = \Sell(q)$, then $Z_1 = X_1 + q$. This proves the claim.
\end{proof}

The claim implies that $Y(X', \Buy(q)) = f(X_1' - q) - f(X_1') \geq f(X_1 - q) - f(X_1) = Y(X, \Buy(q))$, as desired. The proof that $Y(X, \Sell(q)) \leq Y(X', \Sell(q))$ follows a similar format: because $R$ is increasing and $X_1' \leq X_1$, we obtain
$$f(X_1' + q) - f(X_1') = q \cdot R(X_1' + q, X_1') \geq q \cdot R(X_1 - q, X_1) = f(X_1 + q) - f(X_1).$$

Finally, we observe that $Y(X', \Sell(q)) = f(X_1' + q) - f(X_1')$ and $Y(X, \Sell(q)) = f(X_1 + q) - f(X_1)$. This proves Lemma~\ref{lemma:pricing}.
\end{proof}

\section{Proof of Duality Theorem}\label{app:duality}

\begin{restatement}[Theorem~\ref{thm:duality}]
Consider any liquidity pool exchange with potential $\phi$. For any pair of states $X, X' \in L_c(\phi)$, either:
\begin{itemize}
    \item any buy order receives a better execution at $X$  than $X'$, or
    \item any sell order receives a better execution at $X$  than $X'$.
\end{itemize}
\end{restatement}

First, we prove a {\em monotonicity property} for the quality of execution of an order. Consider states $X, X' \in L_c(\phi)$, where the token 1 reserves at $X$ are smaller than at $X'$. We will argue that any buy order receives a better execution at $X'$  than at $X$. Similarly, we will argue that any sell order receives a better execution at $X$  than at $X'$.

\begin{observation}\label{obs:generator-feasibility}
Let $f$ be the generator of $L_c(\phi)$ and $X \in L_c(\phi)$ (Lemma~\ref{lemma:generator}). $\Buy(q)$ can successfully execute at $X$ if and only if $X_1, X_1 - q \in \dom(f)$. $\Sell(q)$ can successfully execute at $X$ if and only if $X_1, X_1 + q \in \dom(f)$.
\end{observation}
\begin{proof}
The fact $X_1 \in \dom(f)$ follows from the fact $f$ generates $L_c(\phi)$ and $X \in L_c(\phi)$. First, we will consider a buy order $\Buy(q)$. If $\Buy(q)$ can successfully execute at $X$, then $(X_1 - q, f(X_1-q)) \in L_c(\phi)$ which implies $X_1 - q \in \dom(f)$. For the converse, if $f(X_1-q) \in \dom(f)$, then $(X_1-q, f(X_1-q)) \in L_c(\phi)$ which implies $\Buy(q)$ can successfully execute at $X$.

Next, we consider a sell order $\Sell(q)$. If $\Sell(q)$ can successfully execute at $X$, then $(X_1 + q, f(X_1+q)) \in L_c(\phi)$ which implies $X_1 + q \in \dom(f)$. For the converse, if $f(X_1+q) \in \dom(f)$, then $(X_1+q, f(X_1+q)) \in L_c(\phi)$ which implies $\Sell(q)$ can successfully execute at $X$.
\end{proof}

\begin{lemma}\label{lemma:execution-monotonicity}
Consider states $X, X' \in L_c(\phi)$ where $X_1 \leq X_1'$. If $\Buy(q, p)$ can successfully execute at $X$, then $\Buy(q, p)$ can successfully execute at $X'$. If $\Sell(q, p)$ can successfully execute at $X'$, then $\Sell(q, p)$ can successfully execute at $X$. 
\end{lemma}
\begin{proof}
Let $f$ be the convex function that generates the set of reachable states $L_c(\phi)$ (Lemma~\ref{lemma:generator-convex}). First, we prove $\Buy(q, p)$ can successfully execute at $X'$. The fact that $X, X' \in L_c(\phi)$ and $\Buy(q, p)$ can successfully execute at $X$ implies $X_1', X_1, X_1 - q \in \dom(f)$ (Observation~\ref{obs:generator-feasibility}). Note $X_1' - q$ is a point between $X_1-q$ and $X_1'$, and thus the convexity of $\dom(f)$ implies $X_1' - q \in \dom(f)$. This proves $\Buy(q)$ can successfully execute at $X'$ (Observation~\ref{obs:generator-feasibility}), and we are left to show $Y(X', \Buy(q)) \leq q \cdot p$. Because the token 1 reserves at $X$ are lower than at $X'$, $\Buy(q)$ pays less by executing at $X'$  than $X$ (Lemma~\ref{lemma:pricing}). Thus the payment by executing at $X'$ is at most $q \cdot p$ since the payment by executing at $X$ is at most $q \cdot p$. This proves $\Buy(q, p)$ can successfully execute at $X'$.

Next, we prove $\Sell(q, p)$ can successfully execute at $X$. The fact that $X, X' \in L_c(\phi)$ and $\Sell(q, p)$ can successfully execute at $X'$ implies $X_1, X_1', X_1' + q \in \dom(f)$ (Observation~\ref{obs:generator-feasibility}). Note that $X_1 + q$ is a point between $X_1$ and $X_1' + q$, and thus the convexity of $\dom(f)$ implies $X_1 + q \in \dom(f)$. This proves that $\Sell(q)$ can successfully execute at $X$ (Observation~\ref{obs:generator-feasibility}), and we are left to show $Y(X, \Sell(q)) \geq q \cdot p$. Because  token 1 reserves at $X$ are lower than at $X'$, $\Sell(q)$ receives more tokens by executing at $X$  than $X'$ (Lemma~\ref{lemma:pricing}). Thus $\Sell(q)$ receives at least $q \cdot p$ tokens by executing at $X$, since $\Sell(q)$ receives at least $q \cdot p$ by executing at $X'$. This proves $\Sell(q, p)$ can successfully execute at $X$.
\end{proof}

\begin{corollary}\label{cor:execution}
Consider states $X, X' \in L_c(\phi)$, where $X_1 \leq X_1'$. $\Buy(q, p)$ receives a better execution at $X'$  than $X$. Moreover, $\Sell(q, p)$ receives a better execution at $X$  than $X'$.
\end{corollary}
\begin{proof}
If $\Buy(q, p)$ fails to execute at $X$, then $\Buy(q, p)$ receives a better execution at $X'$ (by definition). If $\Buy(q, p)$ can successfully execute at $X$, then $\Buy(q, p)$ also successfully executes at $X'$ (Lemma~\ref{lemma:execution-monotonicity}). From the Pricing Lemma (Lemma~\ref{lemma:pricing}), we have $Y(X', \Buy(q)) \leq Y(X, \Buy(q))$. This proves that $\Buy(q, p)$ receives a better execution at $X'$  than $X$.

If $\Sell(q, p)$ fails to execute at $X'$, then $\Sell(q, p)$ receives a better execution at $X$ (by definition). If $\Sell(q, p)$ can successfully execute at $X'$, then $\Sell(q, p)$  can also successfully execute at $X$ (Lemma~\ref{lemma:execution-monotonicity}). From the Pricing Lemma (Lemma~\ref{lemma:pricing}), we have $Y(X, \Sell(q)) \geq Y(X', \Sell(q))$. This proves  that $\Sell(q, p)$ receives a better execution at $X$  than $X'$.
\end{proof}

\begin{proof}[Proof of Theorem~\ref{thm:duality}]
First, consider the case where the token 1 reserves at $X$ are smaller than those at $X'$.  From Corollary~\ref{cor:execution}, any sell order receives a better execution at $X$  than $X'$. Second, consider the case where the token 1 reserves at $X$ are larger than at $X'$. From Corollary~\ref{cor:execution}, any buy order receives a better execution at $X$  than $X'$.
\end{proof}

\bibliographystyle{abbrvnat}
\bibliography{mybib}

\end{document}